\title{A general framework for estimation and inference from clusters of features}
\author {Stephen Reid, Jonathan Taylor and Robert Tibshirani}
\date{}
\newtheorem{theorem}{Theorem}[section]
\newtheorem{lemma}[theorem]{Lemma}
\begin{document}
	\maketitle
	
	\begin{abstract}
		Applied statistical problems often come with pre-specified groupings to predictors. It is natural to test for the presence of simultaneous group-wide signal for groups in isolation, or for multiple groups together. Classical tests for the presence of such signals rely either on tests for the omission of the entire block of variables (the classical $F$-test) or on the creation of an unsupervised prototype for the group (either a group centroid or first principal component) and subsequent $t$-tests on these prototypes.
		
		In this paper, we propose test statistics that aim for power improvements over these classical approaches. In particular, we first create group prototypes, with reference to the response, hopefully improving on the unsupervised prototypes, and then testing with likelihood ratio statistics incorporating only these prototypes. We propose a (potentially) novel model, called the ``prototype model", which naturally models the two-step prototype-then-test procedure. Furthermore, we introduce an inferential schema detailing the unique considerations for different combinations of prototype formation and univariate/multivariate testing models. The prototype model also suggests new applications to estimation and prediction.
		
		Prototype formation often relies on variable selection, which invalidates classical Gaussian test theory. We use recent advances in \textit{selective inference} to account for selection in the prototyping step and retain test validity. Simulation experiments suggest that our testing procedure enjoys more power than do classical approaches.
	\end{abstract}
	
	%%%% INTRODUCTION
	\section{Introduction}\label{sec:introduction}
	
	Suppose that we are presented with a dataset, $X$, of size $n \times p$ and a response vector, $y$, of length $n$. Often there is a predefined grouping of the columns of $X$ (the predictors). Such a grouping might derive from subject matter considerations (genes in gene pathways or stocks in industry groupings, for example) or might be revealed after an unsupervised clustering of the columns.
	
	It is natural to incorporate our prior knowledge of predictor groupings into subsequent analysis, be it variable selection, estimation or inference. Notable examples from the literature include principal component regression, \textit{gene shaving} of \citet{geneshaving2000}, \textit{tree harvesting} of \citet{treeharvesting2001}, averaged gene expressions of \citet{PHT2007} and the canonical correlation clustering and subsequent sparse regression of \citet{buhlmanncluster}.  We also mention the \textit{group lasso}, as studied by \citet{yuanlin2007}, \citet{noteongl}, \citet{simongl2011}, \citet{simonsgl2013} and many others. The methods of \citet{dettlingBuhlmann2004}, \citet{She2010} and \textit{OSCAR} of \citet{oscar2008} perform column clustering and coefficient estimation simultaneously, without recourse to a predetermined grouping.
	
	\citet{buhlmanncluster} and \citet{protolasso} consider the formation of a single prototype from each predictor group, which then acts as group ambassador in subsequent analysis.  Reasons for considering group prototypes include: the enhancement of result interpretability, reduction of the confounding effects of high within-group predictor correlation, improved prediction performance and an increase in power in tests meant to detect significant group-wide signal. In \citet{buhlmanncluster}, the columns in the group are averaged, while in the \textit{protolasso} and \textit{prototest} procedures of \citet{protolasso}, the authors pick the single column with highest marginal correlation from each group. Another interesting procedure that generates its own group prototypes is the \textit{exclusive lasso} of \citet{zhouel}, which was also studied by \citet{allenel}.
	
	\citet{protolasso} in particular touch on inference on prototypes after their construction. One suspects that inference performed on a single, highly marginally predictive prototype within a group might be more powerful than would be a standard F or $\chi^2$ test with degrees of freedom equal to the size of the group. Furthermore, prototypes are chosen with reference to the response, hopefully increasing power. However, care needs to be taken to account in subsequent inference for prototypes that were constructed in this way. 
		
	A drawback of the methodology in \citet{protolasso} is that only a single predictor is selected in each group. Should the signal be spread over multiple predictors, and most of these are discarded after the prototyping step, subsequent analysis and inference might suffer. Although the method extends easily to incorporate more than one member in each group, doing so adds an additional tuning parameter (i.e. the number of members to select with a group), making the method somewhat cumbersome. We seek methods that are  simple to implement and interpret. 
	
	In this paper we endeavour to address these concerns, proposing a methodology for the construction of single group prototypes, including multiple predictors from each group, chosen with reference to the response. We introduce (to our knowledge) a novel linear model setup called the ``prototype model" and a concommitant inferential schema. The schema has two dimensions: the first asks whether the prototypes were constructed with or without variable selection; the second, whether we perform inference assuming the group is considered in isolation  -- the \textit{univariate} model -- or in the presence of other groups -- the \textit{multivariate} model. Cells of this $2 \times 2$ schema are each subject to unique considerations, which are discussed in separate sections. Variable selection informs subsequent inference and we account for it using the selective inference framework introduced by \citet{leesunsuntaylor}.
	
	Although the focus of this paper is inference after prototype construction, consideration of different aspects of our schema led to some interesting developments in estimation. In particular, a new penalty function has emerged, to be used in linear regression, especially when combining predictions from different model fits or data sources. We detour briefly to discuss these developments in the relevant section of the paper.
	
	Section~\ref{sec:model_and_inferential_schema} introduces our chosen model and inferential schema. The section also provides a brief refresher on selective inference. Sections~\ref{sec:ls_ridge_marginal} to \ref{sec:lasso_partial} each consider a single cell of the inferential schema and develops the specific considerations for each. The former two sections pertain to non-selective prototypes, and the latter two to selective prototypes (i.e. those formed after variable selection). For each pair, we first consider the univariate model and then the multivariate model. Section~\ref{sec:conclusion} concludes.

	%%%% MODEL AND INFERENTIAL SCHEMA
	\section{The model and inferential schema}\label{sec:model_and_inferential_schema}
	%% description
	\subsection{Our proposed approach}\label{sec:proposed_method}
	Before delving into the more technical details of the proposed procedure, we provide the reader with a high-level  description. Our procedure is meant to test for the presence of group-wide signal distributed among some variables. We test for the signal presence either in isolation or in the presence of other groups.  Decisions in the prototyping step inform the type inference required in the subsequent step.
	Simply put, our general strategy is as follows.
	\begin{enumerate}
	\item {\textit{Prototyping}}:  Extract prototypes from individual clusters, in either an unsupervised or supervised manner.
	\item {\textit {Testing}}:  Use these prototypes to test for signal in the groups, using either univariate (marginal) tests or a tests from  multivariate model fit to the prototypes
	from step (1).
	\end{enumerate}
	In the prototyping step, we first find a prototype for every group under consideration. A common approach is to find prototypes in a completely unsupervised fashion. Examples here include computing the row-wise mean of the columns in the group (i.e. the group centroid) or the first principal component of the submatrix containing only those group members. Repeating the chosen routine for each of the groups, we reduce our original $p$ predictors to $K$ derived features (here $K$ is the number of groups). Since the features were formed without recourse to the response $y$, we may proceed to test the significance of whichever prototype (or combination) we desire via the standard $z$, $t$ and $F$ tests of the regression literature. Decisions made about the significance of the prototype are assumed to apply to the whole group. These unsupervised prototype tests are represented in the first column of Table~\ref{tab:schema}.
	
	One might have serious reservations about the power of such unsupervised procedures. Should the signal within a group be distributed over only a handful of predictors, the averaging effects of the  unsupervised prototyping step might seriously attenuate a detectable signal, as we mix in many noise variables. A first step to address this concern was proposed in \citet{protolasso}. They propose a marginal screening procedure for prototype selection: use the predictor within the group enjoying the largest marginal correlation with the response. Again, we reduce $p$ predictors to $K$ prototypes, this time members of the original predictor set. The temptation is to perform the standard $t$ or $z$ regression tests as before. However, the prototypes were selected with recourse to the response and we cannot use these reference distributions. Fortunately, application of the selective inference framework of \citet{leesunsuntaylor}, and the strategic selection of the marginal screening procedure to yield prototypes, allow us to proceed with valid inference. The second column of Table~\ref{tab:schema} represents this procedure.
	
	It is expected that the latter testing procedure will enjoy considerably increased power to detect group-wide signal when the group signal is concentrated on one predictor. However, should the signal spread over a larger number, we might find that the single member prototype misses some signal and the procedure loses power as signal is screened during the prototyping step. 
	
	To address this latter concern, we propose the prototyping step incorporating more predictors in prototype formation. We still wish to involve the response in prototype construction. One way to accomplish this is to make the prototype the prediction of a linear regression model of the columns of the group onto $y$. These prototypes are again linear combinations of the members of the group, but the weights are chosen with reference to the response. One might use all columns in the group and make a least squares or ridge regression fit to each group in isolation, producing $K$ prototypes overall (Table~\ref{tab:schema}, Column 3). Alternatively, one might fit a lasso to each group (Table~\ref{tab:schema}, Column 4). This performs variable selection, producing a support set $\hat{M}$. The prototype is the least squares fit of the response to this support set. Again, we expect an increase in power should the signal be spread over only a fraction of the members of the group.
	
	The lasso prototype procedure fits the lasso to each predictor group in isolation first. This produces $K$ prototypes and $K$ sets of affine constraints on $y$, as described in \citet{leesunsuntaylor}. We stack these constraints and carry them along in the subsequent testing step. Testing is done using a likelihood ratio statistic with the likelihood defined in subsequent sections. The reference distribution of this statistic (subject to the prototype selection constraints) is unknown and we generate an approximation of it via sampling. Details are discussed in the sequel.
	
	%% model
	\subsection{The model}\label{sec:model}
	Assume that response vector $y$ is of length $n$, while the predictors are arranged in the matrix $X$ with $n$ rows and $p$ columns (centered and standardized). The columns are divided into $K$ non-overlapping groups, with the indices of those in group $k$ captured in the set $S_k$, with $|S_k| = p_k$, $S_k \cap S_l = \phi$ for $k \neq l$ and $\sum_{k = 1}^K p_k = p$. 
	
	We proceed in two steps: first, a single prototype, $\hat{y}_k$, is constructed for each group $k$ in isolation. In the second step, analysis proceeds only with these prototypes. Classical statistical considerations lead us naturally to the following recursive linear model representation:
	\begin{equation}\label{eq:prototype_model}
		y = \mu + \sum_{k = 1}^K \theta_k\hat{y}_k + \epsilon
	\end{equation}
	where $\theta = (\theta_1, \theta_2, \dots, \theta_K)$ is a fixed, but unknown parameter vector and $\epsilon \sim N(0, \sigma^2I_n)$. Assume $\sigma^2$ is known. This is called the \textit{prototype model}. We proceed with $\mu = 0$ for ease of exposition.
	
	One can imagine myriad ways to the construct prototypes $\hat{y}_k$. In this paper, however, we focus on prototypes linear in $y$, i.e. $\hat{y}_k = H_{S_k}y$. Here $H_{S_k}$ is an $n \times n$ ``hat matrix" depending only on the columns with indices in the set $S_k$. Examples considered in the paper are:
	\begin{itemize}
		\item \textit{least squares (LS) prototypes}: $H_{S_k} = X_{S_k}X_{S_k}^\dagger$.
		\item \textit{ridge prototypes}: $H_{S_k} = X_{S_k}\left(X_{S_k}^\top X_{S_k} + \lambda I_{p_k}\right)^{-1}X_{S_k}^\top$.
		\item \textit{lasso prototypes}: $H_{S_k} = X_{M_k}\left(X_{M_k}^\top X_{M_k}\right)^{-1}X_{M_k}^\top$
	\end{itemize}
	where $X_S$ is the matrix $X$ with columns reduced to those with indices in set $S$, $X^\dagger$ is the Moore-Penrose inverse of $X$ and $M_k \subseteq S_k$ is the set of indices of columns selected by the lasso run on the columns of $S_k$ in isolation at some \textit{fixed} penalty parameter. The ridge penalty parameter $\lambda$ is fixed.
	
	Should the prototypes be linear in $y$, we can rewrite Equation~(\ref{eq:prototype_model}) as
	\begin{equation}\label{eq:linear_prototype_model}
		y = \left(I_n - \sum_{k = 1}^K\theta_kH_{S_k}\right)^{-1}\epsilon \sim N\left(0, \sigma^2\left(I_n - \sum_{k = 1}^K\theta_kH_{S_k}\right)^{-2}\right)
	\end{equation}
	giving log likelihood (omitting unimportant constants):
	\begin{equation}\label{eq:loglikelihood}
		\ell(\theta) = \log |G(\theta)| - \frac{1}{2\sigma^2}||y - \hat{Y}\theta||^2_2 
	\end{equation}
	where $\hat{Y} = [\hat{y}_1, \hat{y}_2, \dots, \hat{y}_K]$ and 
	\begin{equation}\label{eq:G_matrix}
		G(\theta) = I_n - \sum_{k = 1}^K\theta_kH_{S_k}.
	\end{equation}
	
	We note that the log likelihood is concave and has exponential family structure with a novel normalizing constant. This structure is explored further in Section~\ref{sec:ls_ridge_partial_estimation}.
	
	The reader might balk at the apparently unusual  structure of the prototype model. We attempt to assuage this by noting that such a recursive, autoregressive structure is well established in the ARMA model framework in time series analysis. Indeed, should we set $K = 1$, for example, and choose $H_{S_1} = J$ where $J$ is $(n-1) \times n$ and $J_{i-1, i} = 1$, with all other entries 0 and we omit the first row of the set of equations in (\ref{eq:prototype_model}), we obtain the oft studied AR(1) model. Furthermore, we believe this recursive definition is natural for the modeled two-step procedure and it lends itself to tractable estimation and inference.
	
	%% inferential schema
	\subsection{Inferential schema}\label{sec:inferential_schema}
	Table~\ref{tab:schema} summarises the inferential schema considered here. Further development in this paper focuses on the third and fourth columns of the table. Prototype construction (column dimension) has already been discussed. 
	
	We wish to test the hypothesis $H_0: \theta_1 = 0$. The first prototype's parameter can be considered without loss of generality. $H_0$ is tested in one of two versions of model~(\ref{eq:prototype_model}), namely the \textit{univariate model}:
	\begin{equation}\label{eq:marginal}
		y  = \theta_1\hat{y}_1 + \epsilon
	\end{equation}
	and the \textit{multivariate model}
	\begin{equation}\label{eq:partial}
		y  = \sum_{k=1}^K\theta_k\hat{y}_k + \epsilon.
	\end{equation}
	again with $\epsilon \sim N(0, \sigma^2I)$.
	
	Note that the former tests for the inclusion of the first group prototype at the first step of a forward stepwise variable selection procedure, while the latter tests for the deletion in a backward deletion procedure where all the other group prototypes are already present.

	\begin{table}[htb]
		\centering
		\begin{tabular}{|c c | c c c c|}
			\hline
			&&&\textbf{Prototype construction} && \\
			& &Unsupervised & Best single col. & LS and ridge & Lasso  \\[0.5ex]
			\hline
			\textbf{Type of} & Univar. & $t/F$ test & $\textit{protolasso}^1$ & Section~\ref{sec:ls_ridge_marginal} & Section~\ref{sec:lasso_marginal}  \\
			\textbf{test} & Multivar. & $t/F$ test & \textit{protolasso}  & Section~\ref{sec:ls_ridge_partial} & Section~\ref{sec:lasso_partial}  \\
			\hline
		\end{tabular}
		\caption{\emph{
		Inferential schema with two dimensions. \textbf{Column dimension} details how prototypes are constructed. Methods include (from left to right): ``unsupervised" prototypes formed without recourse to the response (eg. cluster centroid or first principal component), protolasso prototypes, least squares and ridge prototypes, and lasso prototypes. \textbf{Row dimension} describes the version of the prototype model for which we do inference: either the univariate model of Equation~(\ref{eq:marginal}) or the multivariate model of Equation~(\ref{eq:partial}). Table entries state where the methodology is developed. 1: \cite{protolasso}}}
		\label{tab:schema}

	\end{table}
	
	In the first column, we construct completely unsupervised prototypes from linear combinations of the predictors, \textit{without looking at the response}. Prototypes in the second column are \textit{single} members of each group, selected with reference to the response. This differs from prototypes in the third and fourth columns. We construct these from linear combinations of \textit{multiple} group members, with the weights determined with reference to the response.
	
	The difference between the third and fourth columns is the manner in which prototypes are constructed. In the third column, we use \textit{all} predictors in the group to construct the prototype (with regularization, if required). In the fourth, we perform variable selection during prototype construction. The constraints placed by this variable selection on the sample space of $y$ should be accounted for in subsequent inference. For this we leverage the tools of \textit{selective inference} first introduced by \citet{leesunsuntaylor}.
		
	%% selective inference
	\subsection{Selective inference}\label{sec:selective_inference}
	The paper by \citet{leesunsuntaylor} brought to the forefront the often neglected fact that inference using classical tools (all based on the Gaussian distribution) fails once we perform variable selection by looking at $y$. Apart from drawing attention to this problem, the authors set about demonstrating how inference should progress post variable selection, that is, \textit{conditional on the selection event}. In particular, they consider the lasso and show how its selection event (conditional on some additional information) can be described by the affine inequalities $Ay \leq b$. They also derive a valid distribution for quantities of the form $\eta\top y$, where $\eta$ may be chosen after selection. The resulting truncated Gaussian distribution can then be used for inference on, inter alia, estimated regression coefficients of selected variables. We omit details here and the interested reader is referred to the reference. We merely mention it to emphasise that our lasso prototypes use the lasso to select variables, inducing affine inequalities on our response $y$. Subsequent inference should account for these restrictions.
	
	Subsequent work has focused on extending results to other selection procedures (\citet{selinfmc}, \citet{selinffs}, \citet{selinfkr}), extending to other members of the exponential family (\citet{selinfwill}) and obtaining some asymptotic results (\citet{selinfasymp}, \citet{selinftian}). \citet{selinfwill} is especially noteworthy for its introduction of the notion of \textit{selected model} tests, of which tests discussed in the sequel are examples.

	%%%% LS-RIDGE MARGINAL
	\section{Least squares and ridge prototypes: univariate model}\label{sec:ls_ridge_marginal}
	This section provides a gentle primer for subsequent sections. All concepts introduced here are well entrenched in statistical lore; we do not quote specific references. The material is unlikely to be novel, but we present it in terms of our framework. We introduce the relevant items here, with a brief review of their important properties, preparing the reader for the extensions and complications introduced later. 
	
	Suppose then we have under consideration a single group, for which we have computed a prototype $\hat{y} = H_{S_1}y$. The matrix $H_{S_1}$ is the hat matrix for either least squares or ridge regression. No variable selection is performed. Of interest is a test of the nullity of signal from this group of predictors in the univariate model (\ref{eq:marginal}) above, i.e. $H_0: \theta_ 1 = 0$. Two statistics come to mind: an \textit{F statistic} and a \textit{likelihood ratio statistic}.
	
	The F statistic is easily derived from classical linear regression considerations. Defining $H_{S_1} = X_{S_1}X_{S_1}^\dagger$, we have, for $p_1 \leq n$, that $U = y^\top H_{S_1} y/\sigma^2 \sim \chi^2_{p_1}$, under $H_0$, independent of $V = y^\top(I-H_{S_1})y/\sigma^2 \sim \chi^2_{n-p_1}$, so that $F = \frac{U/p_1}{V/(n-p_1)} \sim F_{p_1, n - p_1}$. We test $H_0$ by comparing $F$ to the appropriate quantile of the $F$-distribution, rejecting if $F$ becomes large.
	
	Note that this $F$-distribution holds regardless of the regularisation parameter $\lambda$ chosen for the ridge prototypes. The practitioner may select $\lambda$ in whatever manner they deem fit, even adaptively (using cross-validation, say). Cancellations required to ensure $V$ has the required $\chi^2$ distribution obtain even if $\lambda$ is obtained adaptively.
	
	Since our hypothesis pertains to a single parameter, we would prefer a test statistic with commensurate degrees of freedom. The $F$ statistic provides us with a valid test, but uses many degrees of freedom. We suspect that this comes at the cost of power. Hence our interest in the likelihood ratio statistic $R = 2\left({\rm argmax}_{\theta_1}\ell(\theta_1) - \ell(0)\right) = 2\left(\ell(\hat{\theta}_1) - \ell(0)\right)$, where
	\begin{equation}\label{eq:loglikelihood_single}
		\ell(\theta_1) = \sum_{i = 1}^{n}\log(1 - \theta_1\xi_i) - \frac{y^\top y}{2\sigma^2} + \frac{y^\top H_{S_1} y}{2\sigma^2}(2\theta_1 - \theta_1^2)
 	\end{equation}
	and $\xi_i$ the $i^{th}$ eigenvalue of the matrix $H_{S_1}$. This is rapidly maximized using the Newton-Raphson algorithm. A closed form expression for $\hat{\theta_1}$ is possible if $H_{S_1}$ is a projection matrix, since then the eigenvalues $\xi_i$ are either 0 or 1.
	
	It is well known that $R$ has an asymptotic $\chi^2_1$-distribution under $H_0$: an approximation that holds closely even for moderately small sample sizes. We use the $\chi^2_1$-distribution as reference distribution and reject $H_0$ if $R$ is large relative to the quantiles of this distribution. Figure~\ref{fig:ls_marg_single} shows output from a small simulation confirming the validity of the two tests, but demonstrating the large increase in power for the likelihood ratio test over the $F$ test -- a theme we continually revisit in subsequent sections.
	
	\begin{figure}[htb]
		\centering
		\begin{tabular}{@{}cc@{}}
			\includegraphics[width=80mm]{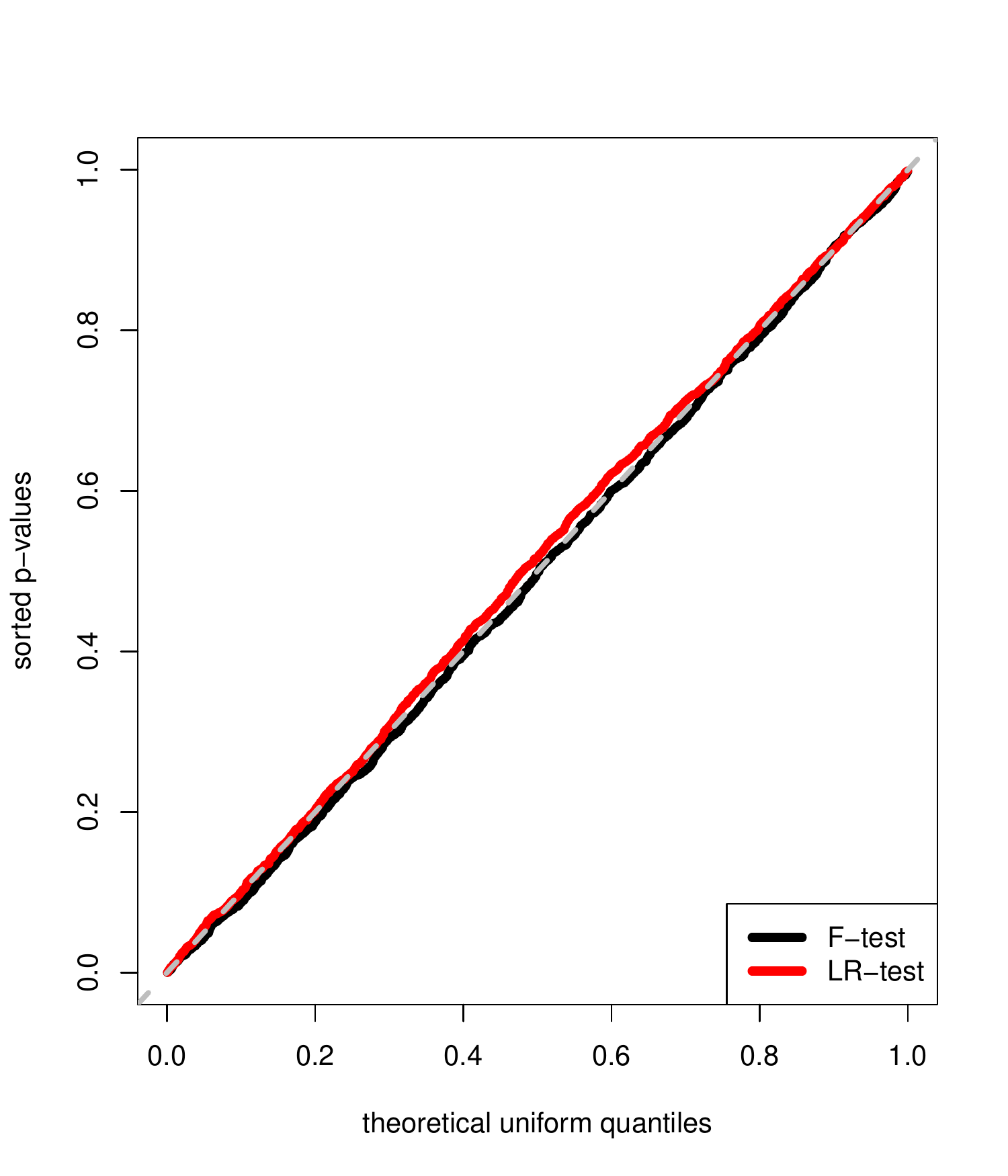} & 
			\includegraphics[width=80mm]{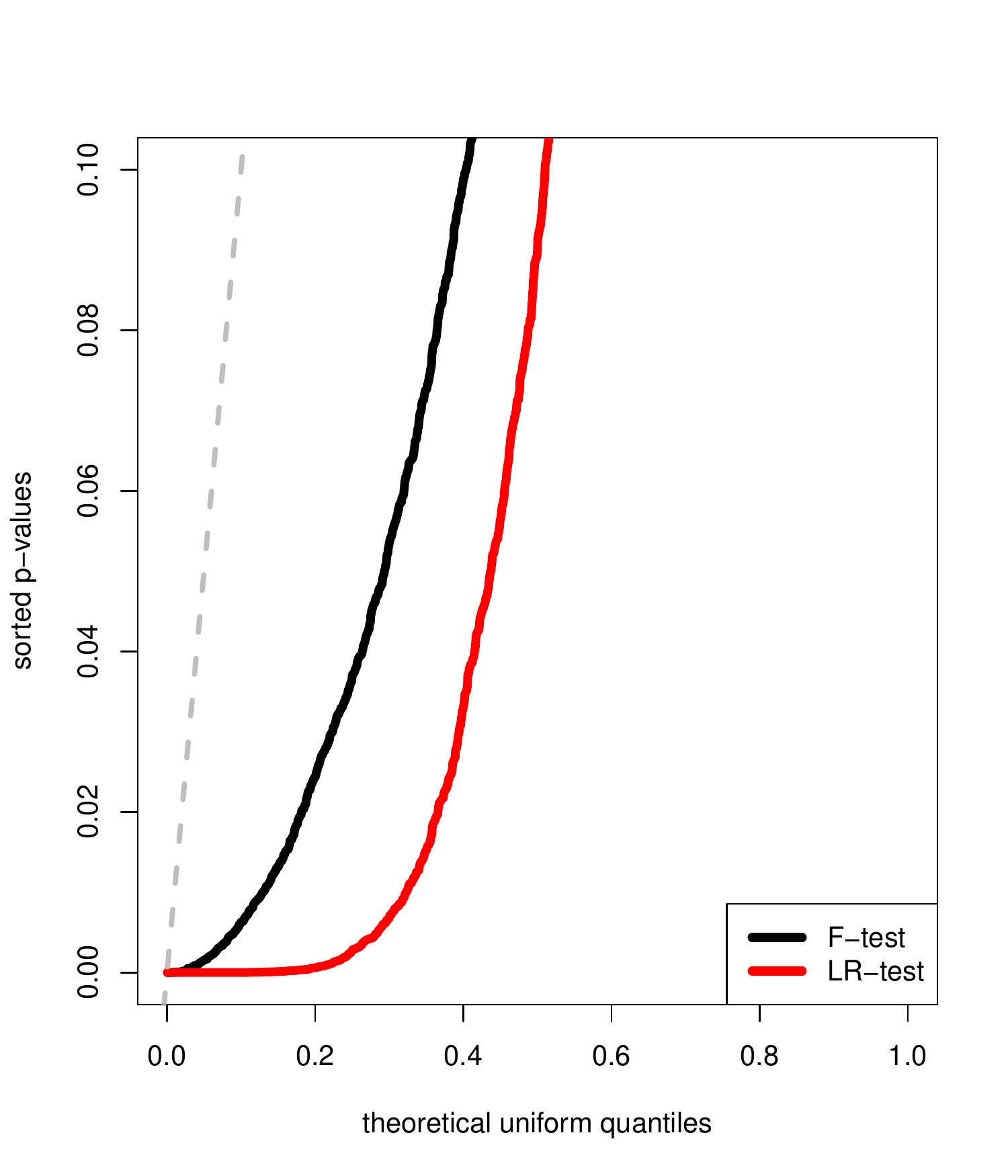}
		\end{tabular}
		\caption{\emph{Output from simulation study of F and likelihood ratio tests in the univariate model, using ridge prototypes with fixed $\lambda = 10$. $n = 100$ and $p_1$ with columns of $X_{S_1}$ generated from N(0,1) distribution and pairwise correlation $\rho = 0.3$ amongst all the columns. $\sigma^2 = 1$. \textbf{Left panel:} QQ plot of p-values (vertical) and uniform quantiles (horizontal) under $H_0$ for F test (in black) and likelihood ratio test (in red). \textbf{Right panel:} Similar QQ plot for $\theta_1 = 1.2$. 45 degree dashed line through origin for reference. Vertical axis truncated at 0.1 for ease of readability. In general, tests with curves lying down and to the right enjoy more power. Clearly the likelihood ratio test seems to have more power than does the F test. Power of the F test at significance level $\alpha = 0.05$ is 29\% and 43\% for the likelihood ratio test. At $\alpha = 0.1$, the numbers are 40\% and 51\% respectively.}}\label{fig:ls_marg_single}
		\end{figure}

	{\bf Remark:} Even though we make the assumption that $\sigma^2$ is known, in this simple setup, we do not require it. The $F$ statistic does not depend on $\sigma^2$ and the latter can be treated as a nuisance parameter to be maximized over when computing $R$.

		%%% LS-RIDGE PARTIAL
		\section{Least squares and ridge prototypes: multivariate model}\label{sec:ls_ridge_partial}
		
		\subsection{Inference}\label{sec:ls_ridge_partial_inference}
		Inference for the multivariate model~(\ref{eq:partial}) is a straight forward extension of that for the univariate model. Define $\tilde{X}_{-1}$ as the matrix $X$ without all columns in group 1, $\tilde{H}_{-1} = \tilde{X}_{-1}\tilde{X}_{-1}^\dagger$ and $H = XX^\dagger$. Then, under $H_0$, $U = y^\top\left(I- \tilde{H}_{-1}\right)y$ and $V = y^\top\left(I- H\right)y$ are such that $F = \frac{(U-V)/p_1}{V/(n-p)} \sim F_{p_1, n-p}$. The likelihood ratio statistic has the expected form: $R = 2\left({\rm argmax}_\theta \ell(\theta) - {\rm argmax}_{\theta_1 = 0} \ell(\theta)\right)$ where the likelihood is defined in (\ref{eq:loglikelihood}). Again, the latter has a $\chi^2_1$ reference distribution.
		
		Power improvements for the likelihood ratio test over the $F$ test are similar to those seen in the univariate model. We do not pursue this further here. Rather, we focus on the maximisation of the likelihood and its implications for estimation.
		
		\subsection{Estimation: the prototype penalty}\label{sec:ls_ridge_partial_estimation}
		Maximizing the likelihood $\ell(\theta)$ of Equation~(\ref{eq:loglikelihood}) is equivalent to the convex optimization problem
		\[
			{\rm minimize}_\theta \,\, \frac{1}{2\sigma^2}||y - \hat{Y}\theta||^2_2 - \log|G(\theta)|
		\]
		\[
			{\rm subject \,\, to \,\,} G(\theta) \succeq 0
		\]
		We note that the penalty functional $-\log|G(\theta)|$ is the composition of a convex function (log determinant) with a linear functional in $\theta$. Hence it is convex in $\theta$. The problem is solved using the Newton-Raphson algorithm. The $K \times 1$ gradient, $g$, of the objective has entries
		\begin{equation}\label{eq:gradient}
			g_k = \frac{1}{\sigma^2}\hat{y}_k^\top\left(\hat{Y}\theta - y\right) + {\rm tr}\left(G(\theta)^{-1}H_{S_k}\right)
		\end{equation}
		and the $K \times K$ Hessian, $H$, has entries
		\begin{equation}\label{eq:hessian}
			H_{kl} = \frac{1}{\sigma^2}\hat{y}_k^\top \hat{y}_l + {\rm tr}\left(G(\theta)^{-1}H_{S_k}G(\theta)^{-1}H_{S_l}\right)
		\end{equation}
		
		A simple Newton-Raphson implementation (with backtracking to ensure convergence and that we remain in the semidefinite cone) tends to converge in a handful of iterations. The major bottleneck of the algorithm is in the computation of the inverse of $G(\theta)$. This is discussed in more detail later.
		
		Here we note that the structure of the optimisation problem reminds of a penalised regression much of the flavor of the lasso or ridge regression. We have a squared error term with the addition of a convex penalty on the parameters. The difference here is that the predictor matrix $\hat{Y}$ already depends on the response $y$ via the first round cluster-specific predictions. 
		
		To our knowledge this particular form of penalty function $Q(\theta) = -\log|G(\theta)|$ has not received attention in the literature. We believe that it holds promise for the aggregation of linear predictions of a response $y$ from different data sources and predictive models. Perhaps there could be applications in the ensemble learning literature. 
		
		One notices that the penalty term does not have the standard regularization parameter used when tuning models for prediction accuracy. We merely use the optimization problem directly suggested by the log likelihood. It is possible to include such a parameter and estimate it using, say, cross-validation. We do not pursue this matter here and leave it as interesting future work. Next, we consider a simple example in an effort to gain a better understanding of the behaviour of the penalty function.
		
		\subsection{Estimation: an example}\label{sec:ls_ridge_partial_estimation_eg}
		Suppose we have $K = 2$ groups of predictors, each having only one member, $x_k$, such that $||x_k||_2 = 1$ and marginal correlation with the response $x_k^\top y = \rho_k$ for $k = 1, 2$. Furthermore, let the cross group correlation $x_1^\top x_2 = \xi$. Let $X = [x_1,\, x_2]$ and $H_{S_k} = x_k x_k^\top$ -- the least squares prototype for this group. Also, let $\sigma^2 = 1$. The least squares coefficients of the response regressed onto these two predictors are $\hat{\beta} = \left(X^\top X\right)^{-1}X^\top y = (\frac{\rho_1 - \xi\rho_2}{1-\xi^2}, \frac{\rho_2 - \xi\rho_1}{1-\xi^2})$.
		
		By letting $M(\theta) = \theta_1x_1x_1^\top + \theta_2x_2x_2^\top$, we can find a closed form expression for the penalty function and log likelihood, by solving for the eigenvalues in the equations ${\rm tr}\left(M(\theta)\right) = \theta_1 + \theta_2$ and ${\rm tr}\left(M(\theta)^2\right) = \theta_1^2 + \theta_2^2 + 2\rho^2\theta_1\theta_2$:
		\begin{align*}
			\ell(\theta) &= \log\left(1-\frac{\theta_1 + \theta_2 + \sqrt{(\theta_1-\theta_2)^2 + 4\xi^2\theta_1\theta_2}}{2}\right) + \\ &\log\left(1-\frac{\theta_1 + \theta_2 - \sqrt{(\theta_1-\theta_2)^2 + 4\xi^2\theta_1\theta_2}}{2}\right) - \frac{1}{2}\left(\theta - c\right)^\top Q\left(\theta - c\right) + k
		\end{align*}
		where $c = (\frac{1-\rho_2\xi/\rho_1}{1-\xi^2}, \frac{1-\rho_1\xi/\rho_2}{1-\xi^2})$ and
		\[
			Q = \left(\begin{array}{c c}
				\rho_1^2 & \rho_1\rho_2\xi \\
				\rho_1\rho2\xi & \rho_2^2
			\end{array}\right)
		\]
		Notice that $c_k\rho_k = \hat{\beta}_k$, so that a second round, unpenalized least squares fit on the prototypes yields the original least squares coefficients.
		
		However, as seen in Figure~\ref{fig:contour}, the penalty has level curves that increase down and to the left. Also, the positive semidefinite constraint limits the domain of the curve mostly to the southwestern quadrant with top righthand corner at (1,1). Level curves of the squared error objective (ellipses) are shown for different values of $\rho_1$, $\rho_2$ and $\xi$. Since the optimal point of the penalized problem is a tangent point between a level curve of the squared objective and the penalty, we see that the penalty shrinks the prototype coefficients away from the least squares fit. 
				
		\begin{figure}[htb]
			\centering
			\includegraphics[width=160mm]{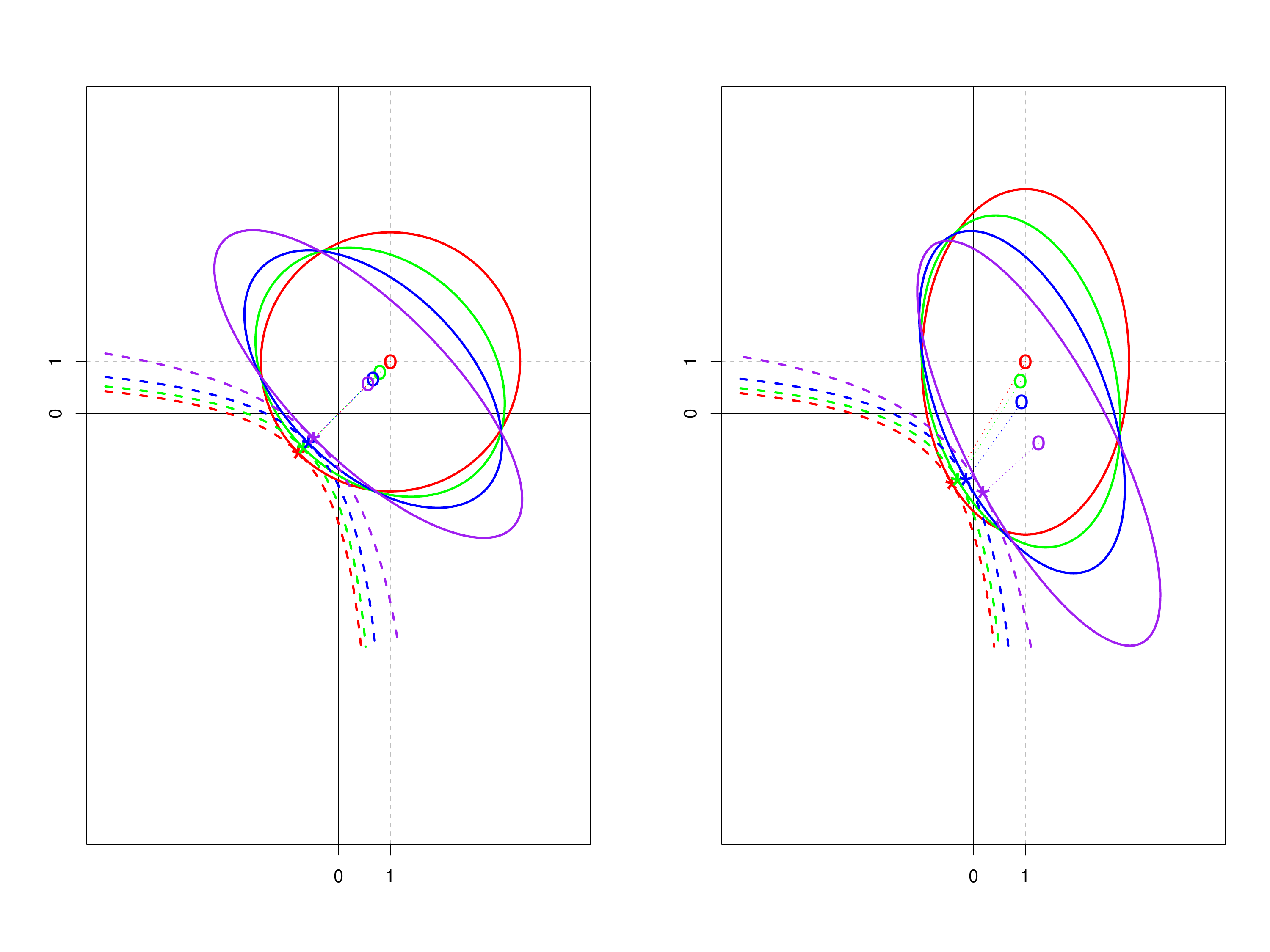}
			\caption{\emph{Contour lines in the $(\theta_1, \theta_2)$ plane for the squared objective (\textbf{solid}) and penalty (\textbf{dashed}) at optimal $\hat{\theta}$. Each panel shows the curves for 4 values of the cross group correlation, $\xi$, parameter: 0 (red), 0.25 (green), 0.5 (blue) and 0.75 (purple). \textbf{Left panel:} $\rho_1 = \rho_2 = 0.4$. \textbf{Right panel:} $\rho_1 = 0.5$, $\rho_2 = 0.3$. The optimal point at the tangent between the relevant squared objective and penalty contours is shown as a star. Notice how these are shrunken down and to the left from the least squares points (marked as crosses at the centroids of the ellipses).}}\label{fig:contour}
		\end{figure}
		
		We note in Figure~\ref{fig:contour} that the penalised $\theta$ estimate seems to have been dragged into the southwestern quadrant (making both estimates negative). This is somewhat counterintuitive and one suspects that prediction accuracy could be hampered. However, the simulation experiment of the next section assuages our fears. We presume that counterintuitive results are obtained because of the highly manufactured nature of the example.
		
		\subsection{Estimation: a simulation experiment}\label{sec:ls_ridge_partial_estimation_sim}
		A small simulation experiment was conducted, demonstrating how the estimation of $\theta$ in the prototype model (with lasso prototypes) improves prediction over the standard lasso. Furthermore, we see that the performance of the lasso prototype model is often reasonably similar to that of an oracle model. Oracles are privy to knowledge about the problem not usually revealed in practice. Description of the experiment and its results is deferred to Appendix~\ref{app:estimation_sim}.
		
	%%% LASSO PROTOTYPES; MARGINAL
	\section{Lasso prototypes: univariate model}\label{sec:lasso_marginal}
	
	The reader is reminded that the \textit{lasso prototype} method first applies the lasso to each group of variables in isolation, forms the prototype for each as the least squares projection of the response on the column space of the selected columns, and proceeds with subsequent analysis on the prototypes. In this section, we focus on the univariate model~(\ref{eq:marginal}). We thus consider just a single group and test whether it carries significant response signal. One can imagine that such single group tests might be performed one-by-one, generating p-values that are subsequently fed into some multiple testing procedure -- say one that controls the False Discovery Rate (FDR).
	
	Our endeavour is complicated slightly by the variable selection in the prototyping step. Subsequent testing should account for the selection by \textit{conditioning on the selection information}.
	
	We propose a number of test statistics and discuss how to find their reference distributions under the null $H_0: \theta = 0$. The validity of each of these tests is demonstrated in a simulation experiment. Test power is also compared.
		
	%% LR tests
	\subsection{Exact likelihood ratio test statistic}\label{sec:lasso_marginal_lr_exact}
	Suppose we reparameterise the single group log likelihood~(\ref{eq:loglikelihood_single}) and write it in canonical exponential family form with natural parameter (dropping subscripts for readability) $\eta = 2\theta - \theta^2$, so that $\theta = 1 - \sqrt{1-\eta}$. Furthermore, notice that for the lasso prototypes we have $\xi_i$ in (\ref{eq:loglikelihood_single}) either 0 or 1. Suppose the first round lasso selected $M$ variables. Then $\sum_{i = 1}^n\xi_i = M$. 
	
	We may then write the reparameterised form
	\begin{equation}\label{eq:loglikelihood_single_natural}
		\ell(\eta) = \frac{M}{2}\cdot\log(1-\eta) -\frac{y^\top y}{2\sigma^2} + \frac{y^\top H y}{2\sigma^2}\cdot\eta
	\end{equation}
	which leads to the maximum likelihood estimator (MLE) $\hat{\eta} = 1 - \frac{M\cdot\sigma^2}{y^\top H y}$ and exact likelihood ratio (ELR) statistic:
	\begin{equation}\label{eq:elr_single}
		R = M\cdot\log{(M\cdot\sigma^2)} - M\cdot\log{y^\top H y} + \frac{y^\top H y}{\sigma^2} - M
	\end{equation}
	
	Before we can use this statistic to test the hypothesis $H_0: \theta = 0$, we require its reference distribution (i.e. its distribution under $H_0$). Classical inference theory suggests that the asymptotic distribution, without conditioning on selection, is $\chi^2_1$ and that we use quantiles of this distribution to compare to the realised values of $R$.
	
	To perform valid inference after selection, however, we require the \textit{conditional} reference distribution of $R$, conditioned in the selection event $Ay \leq b$. Closed form expressions for $A$ and $b$ are given in \citet{leesunsuntaylor}.
	
	One option for finding this reference distribution is to sample from $y^\star \sim N(0, \sigma^2)$ subject to $Ay^\star \leq b$. Such a sample might be gleaned using a \textit{hit-and-run sampler}. We defer details to Appendix~\ref{app:hit_and_run}. Once we have a large number of replications of $y^\star$, we plug them into the expression for $R$, resulting in a large number of replications ($B$) for $R^\star$. A p-value for testing the null hypothesis is the proportion of these replications of $R^\star$ that are larger than our observed $R$:
	\[
		p_{HR} = \frac{\#\{b: R^\star_b > R\}}{B}
	\]
	
	An alternative, non-sampling, albeit asymptotic, option is to leverage the classical $\chi^2_1$ result and those of \citet{selinftian}. The latter reference suggests that, subject to certain conditions likely to be met here, asymptotic results pertaining to exponential families carry over under the constraints imposed by selection. In particular we refer to Section 2.3 of the reference. The upshot is that a test might still use the $\chi^2_1$ reference distribution, but one that is \textit{truncated} appropriately. Indeed, the selection constraints $Ay \leq b$ impose constraints on the value of $R$, i.e. that $q^\star \leq R \leq Q^\star$, say, once we condition on additional quantities to achieve a tractable distribution. The p-value then becomes
	\[
		p_{\chi^2_1} = P(\chi^2_1 > R|q^\star \leq \chi_1^2 \leq Q^\star)
	\]
	
	Limits $q^\star$ and $Q^\star$ are obtained from bounds for the norm $||Hy||_2 = \sqrt{y^\top H y}$. These latter bounds obtain once we condition on some additional information. In particular, suppose we condition on the value of $A\left(I - H\right)y = \delta$ and the direction $Hy/||Hy||_2 = v$. We may write the lefthand side of the selection constraints $Ay = A\left(Hy + \left(I-H\right)y\right) = ||Hy||_2\cdot Av + \delta \leq b$. Note from the last segment of the display that we still allow variation in $||Hy||_2$. Lower ($t^\star$) and upper ($T^\star$) bounds for the norm $||Hy||_2$ obtain as:
	\[
		t^\star = \max_{j: (Av)_j < 0}\tilde{b}_j/(Av)_j\qquad \text{and}\qquad T^\star = \min_{j: (Av)_j > 0}\tilde{b}_j/(Av)_j
	\]
	where $\tilde{b} = b - \delta$. Limits $q^\star$ and $Q^\star$ follow easily as:
	\[
		Q^\star = \max\{R(T^\star), R(t^\star)\} \,\, \text{and}
	\]
	\[
	 q^\star = 0 \,\, \text{if} \,\, \sigma\sqrt{M} \in [t^\star, T^\star] \,\,\,\, \text{or} \,\,\,\, q^\star = \min\{R(T^\star), R(t^\star)\} \,\, \text{otherwise}
	\]
	where $R(t) = M\cdot\log{(M\cdot\sigma^2)} - 2M\cdot\log{t} + \frac{t^2}{\sigma^2} - M$. We note that p-values constructed using a truncated distribution with these bounds are asymptotically valid much as those constructed from the $\chi_1^2$ in standard, non-selective inference. However, since we condition on further information (i.e. we fix $\delta$ and $v$), we might expect a decrease in power relative to the test with the sampled distribution. This is pursued further in the simulation study later on.
	
	%% approx LR stat
	\subsection{Approximate likelihood ratio test statistic}\label{sec:lasso_marginal_lr_approx}
	Although the single group case allows for tractable, closed form expressions of the $ELR$, we will see in the next section that this is no longer the case once we introduce additional groups. Furthermore, we are forced to rely on sampling to generate the reference distribution. A rapidly computed test statistic is thus  needed for practical tractability. In this section we introduce the \textit{approximate likelihood ratio} ($ALR$) statistic for a single group. Its multiple group counterpart, considered in more detail later, offers tractability without a reduction in power compared to the $ELR$.
	
	Suppose we approximate the log likelihood in~(\ref{eq:loglikelihood_single_natural}) by its second order Taylor approximation about $\eta = 0$:
	\[
		\tilde{\ell}(\eta) =  -\frac{M}{2}\cdot\eta -\frac{M}{4}\cdot\eta^2 -\frac{y^\top y}{2\sigma^2} + \frac{y^\top H y}{2\sigma^2}\cdot\eta
	\]
	and define the $ALR$
	\begin{equation}\label{eq:alr_single}
		\tilde{R} = 2\left(\max_\eta\tilde{\ell}(\eta) - \tilde{\ell}(0)\right) = \left(\frac{y^\top H y/\sigma^2 - M}{\sqrt{2M}}\right)^2
	\end{equation}

	A hit-and-run reference distribution can be generated synonymously as that of the $ELR$. An exact characterisation of the reference distribution is also possible, unlike the asymptotic one of the $ELR$. In particular, consider
	\[
		y^\top H  y/\sigma^2 = \sum_{i = 1}^M(u_i^\top y/\sigma)^2
	\]
	where $u_i$ are the left singular vectors of $X$ with columns reduced to the selected variables. We note that $z_i = u_i^\top y/\sigma \stackrel{i.i.d}{\sim} N(0,1)$, under $H_0$, so that $\tilde{R}$ has a distribution closely related to the $\chi^2_M$ distribution. 
	
	P-values for this test statistic are constructed as
	\[
		p_{ex} = P(\tilde{R} > \tilde{r}) = P(\chi^2_M \not\in M\pm\sqrt{2\cdot M\cdot \tilde{r}}\,\, | \,\, \tilde{q}^\star \leq \chi^2_M \leq \tilde{Q}^\star)
	\]
	where,
	\[
		\tilde{Q}^\star = (T^*/\sigma)^2 \,\, \text{and} \,\, \tilde{q}^\star = (t^*/\sigma)^2,
	\]
	with $t^*$ and $T^*$ as in the previous section and $\tilde{r}$ the observed value of the approximate LR statistic.
	%%\[
	 %%\tilde{q}^\star = 0 \,\, \text{if} \,\, \sigma\sqrt{M} \in [t^\star, T^\star] \,\,\,\, \text{or} \,\,\,\, \tilde{q}^\star = \min\{\tilde{R}(T^\star), \tilde{R}(t^\star)\} \,\, \text{otherwise}
	%%\]
	%%where $\tilde{R}(t) = \left(\frac{t^2/\sigma^2 - M}{\sqrt{2M}}\right)^2$. 
	
	We note that the form of the ALR hints strongly at a $\chi^2_1$ asymptotic distribution. The quantity inside the brackets is a mean centered and standardised sum of independent random variables. The Central Limit Theorem suggests that this quantity has asymptotic $N(0,1)$ distribution as $M\rightarrow\infty$. Squaring, we obtain the $\chi^2_1$ distribution. 
	
	Indeed, proofs of the asymptotic distribution of the $ELR$ often use the $ALR$ and show that the difference between the $ALR$ and $ELR$ disappears as $M \rightarrow \infty$. We have found in our simulations, however, that the $ALR$ settles down to its asymptotic distribution more slowly than does the $ELR$. We do not consider the $\chi^2_1$ distribution as valid reference distribution to the $ALR$ in the sequel. Performance of the ELR and ALR, each with a hit-and-run and non-sampling distribution, is compared in the simulation study below.
	
	%% protolasso
	\subsection{Marginal screening single prototype test: \textit{protolasso}}\label{sec:lasso_marginal_protolasso}	
	\citet{protolasso} introduced the \textit{protolasso} and \textit{prototest} methods. Based on considerations of their paper, a test for a group-wide signal, although not discussed in detail there, would be for $H_0: \beta = 0$ in the model
	\[
		y = \beta x_{i^\star} + \epsilon
	\]
	where $i^\star = {\rm argmax}_i |x_i^\top y|$.
	
	The test statistic is the regression coefficient $Z = \hat{\beta}/\sigma = x_{i^\star}^\top y/\sigma$. Remember, we assume the columns $X$ are standardised. The reference distribution under $H_0$ is $Z\sim N(0, 1)$ subject to $A^{mc}y \leq b^{mc}$ where $A^{mc}$ and $b^{mc}$ are the constraint matrices encoding the \textit{marginal screening} selection procedure that led to $i^\star$. \citet{selinfmc} has details.
	
	The constraints on $y$ are easily converted to constraints on $Z$. If we condition on $\delta = \left(I - x_{i^\star}x_{i^\star}^\top\right)y$, we can rewrite the constraints as $A^{mc}y = (x_{i^\star}^\top y)\cdot A^{mc}x_{i^\star} + \delta \leq b^{mc}$. The lower and upper limits are
	\[
		\mathcal{Z}^- = \max_{j : (A^{mc}x_{i^\star})_j < 0}\tilde{b}^{mc}_j/(A^{mc}x_{i^\star})_j \qquad \text{and}\qquad \mathcal{Z}^+ = \min_{j : (A^{mc}x_{i^\star})_j > 0}\tilde{b}^{mc}_j/(A^{mc}x_{i^\star})_j
	\]
	where $\tilde{b}^{mc} = b^{mc} - \delta$. The p-value for the test is
	\[
		p_{protolasso} = 1- \frac{\Phi(\min\{|Z|, \mathcal{Z}^+\}) - \Phi(\max\{-|Z|, \mathcal{Z}^-\})}{\Phi( \mathcal{Z}^+) - \Phi( \mathcal{Z}^-)}
	\]
	
	We expect this test to do well when the group has a single strong signal. However, should the signal be spread out over multiple members of the group, our prototyping step might miss the signal or, having selected one of the signals as a prototype, our subsequent test might underestimate the signal present. It only looks at one group member after all. We compare the performance of this test to the ELR and ALR below. We expect the latter two to have more power when signal is spread out over the group and not concentrated on a single member.
	
	%% F stat
	\subsection{Truncated F statistic}\label{sec:lasso_marginal_f}
	Similar to what was done in Section~\ref{sec:ls_ridge_marginal}, we can test $H_0: \theta = 0$ in the model $y = \theta H y + \epsilon$ using the $F$-statistic:
	\[
		F = \frac{y^\top H y/M}{y^\top \left(I - H\right)y/(p-M)}
	\]
	which has an $F$-distribution with degrees of freedom $M$ and $n-M$. 
	
	\citet{sam_sel_inf_int} propose a non-sampling test based on this statistic. As before, the statistic retains its base distribution and it is left to us to find truncation intervals on $F$ implied by $Ay \leq b$. In the reference, the authors propose we condition on $v_N = Hy/||Hy||_2$, $v_D = (I - H)y/||(I-H)y||_2$ and $l = ||y||_2$. The selection constraints become
	\[
		l\cdot(cF)^{1/2}\cdot Av_N + l\cdot Av_D \leq (1 + cF)^{1/2}b 
	\]
	where $c = p/(n-p)$ and the truncation region of $F$ is:
	\[
		\tilde{F} = \cap_j\{f > 0: q_j\sqrt{c\cdot f} + r_j\sqrt{1+c\cdot f} + s_j \leq 0\}
	\] 
	where $q_j = l\cdot (Av_N)_j$, $r_j = -b_j$ and $s_j = l\cdot(Av_D)_j$. This region is slightly more complicated than the ones encountered so far and could, in principle, consist of a union of non-overlapping intervals. Still, they are easily computed. We refer the reader to the reference for details. The p-value becomes
	\[
		p_F = P(F_{M, n-M} > F | F_{M, n-m} \in \tilde{F})
	\]
	
	Of course, one can also generate a reference distribution for this statistic using the hit-and-run sampler.
	
	%%% simulation
	\subsection{Power comparison: a simulation experiment}\label{sec:lasso_marginal_sim}
	We conducted a small simulation experiment to compare the performance of the proposed tests in the univariate
	setting (top row of Table  \ref{tab:schema}). Although small in scope, the experiment clearly demonstrates their validity, the relative performance of selective and non-selective tests and the loss of power associated with the excessive conditioning required to obtain non-sampling tests.
	
	Data matrix $X$, with $n = 100$ rows and $p = 50$ columns, was generated. Entries in each column were independent $N(0,1)$ variables. Correlation between each pair of columns was set to $\rho$. Two values of $\rho$ were tried: $\rho = 0 $ and $\rho = 0.3$. Results were qualitatively similar and we only report for $\rho = 0$.
	
	Given $X$, we generated $B = 800$ replications of the response $y$ from the model $y = X\beta + \epsilon$, with $\epsilon \sim N(0, \sigma^2  =1)$. We considered four configurations for $\beta$:
	\begin{enumerate}
		\item \textit{No signal: }$\beta = 0$. Here we check the validity of the proposed tests. We expect p-values for all tests to follow a uniform distribution on $[0,1]$.
		\item \textit{Single, large signal: }$\beta_1 = 4$ and $\beta_j = 0$ for $j = 2, 3, \dots, p$.
		\item \textit{Five moderate signals: }$\beta_1 = \dots = \beta_5 = 4/\sqrt{5}$, $\beta_j = 0$ for $ j > 5$.
		\item \textit{Spread out signal: } $\beta_j = 4\times(11 - j)/\sqrt{382}$, $j = 1, 2, \dots, 10$ and $\beta_j = 0$ for $j > 10$.
	\end{enumerate}
	Note that non-zero $\beta$ configurations all have a signal-to-noise ratio $||\beta||_2/\sigma^2 = 4$. Tests considered here include:
	\begin{enumerate}
		\item Exact likelihood ratio tests of Section~\ref{sec:lasso_marginal_lr_exact}, both the version with the hit-and-run reference distribution ($E$-$HR$) and with the truncated $\chi_1^2$ distribution ($E$-$Chi$).
		\item Approximate likelihood ratio tests of Section~\ref{sec:lasso_marginal_lr_approx}, both the truncated, exact reference version ($A$-$Exact$) and the hit-and-run version ($A$-$HR$).
		\item Protolasso test of Section~\ref{sec:lasso_marginal_protolasso}, ($PT$). 
		\item Truncated F tests from Section~\ref{sec:lasso_marginal_f}, with truncated $F$ ($F$) and hit-and-run ($F$-$HR$) reference distributions.
		\item A non-selective test, using the exact maximum likelihood statistic with $H = XX^\dagger$ and $M = p$ ($LR$-$all$). We select all columns without recourse to the response. Reference distribution is asymptotically $\chi_1^2$. We suspect that the presence of a relatively large number of noise variables could impede the power of this test to detect signal sprinkled over a limited number of columns. The hope is that selective tests might improve on the power of this one.
		\item A non-selective, oracle test ($LR$-$or$), privy to the exact sparsity pattern of $X$. If $S$ is the set of non-zero indices of $\beta$ in configurations 2-4 above, we use $H = X_SX_S^\dagger$ in an exact likelihood ratio test, again with a $\chi_1^2$ reference distribution. Note that this test is not available in practice: the true sparsity pattern of $\beta$ in unknown. We expect this test handily to outperform its competitors, especially since classical considerations suggest it to be asymptotically most powerful. We show the power of this test to give an indication of the very outer limits of the power potential of the others. 
		\item Unsupervised prototype tests. Here we test $H_0: \beta = 0$ in the model $y = \beta \tilde{x} + \epsilon$, where $\tilde{x}$ is either the group centroid ($t$-$mean$) or the first principal component ($t$-$PC$). We test using the standard $t$-test of the regression literature. They become more competitive as the signal becomes more spread out.
	\end{enumerate}
	For each replication of the response, a single lasso prototype is computed (for the tests that require it) at a \textit{fixed} regularisation parameter $\lambda$, set so as to select approximately 10 variables each time. Results are summarized in Figure~\ref{fig:sim_p_vals_single} and Table~\ref{tab:power_rat_single}. All hit-and-run distributions are based on 50000 samples, with a burn-in of 10000 samples.
	
	\begin{figure}[htb]
		\centering
		\includegraphics[width=155mm, height=135mm]{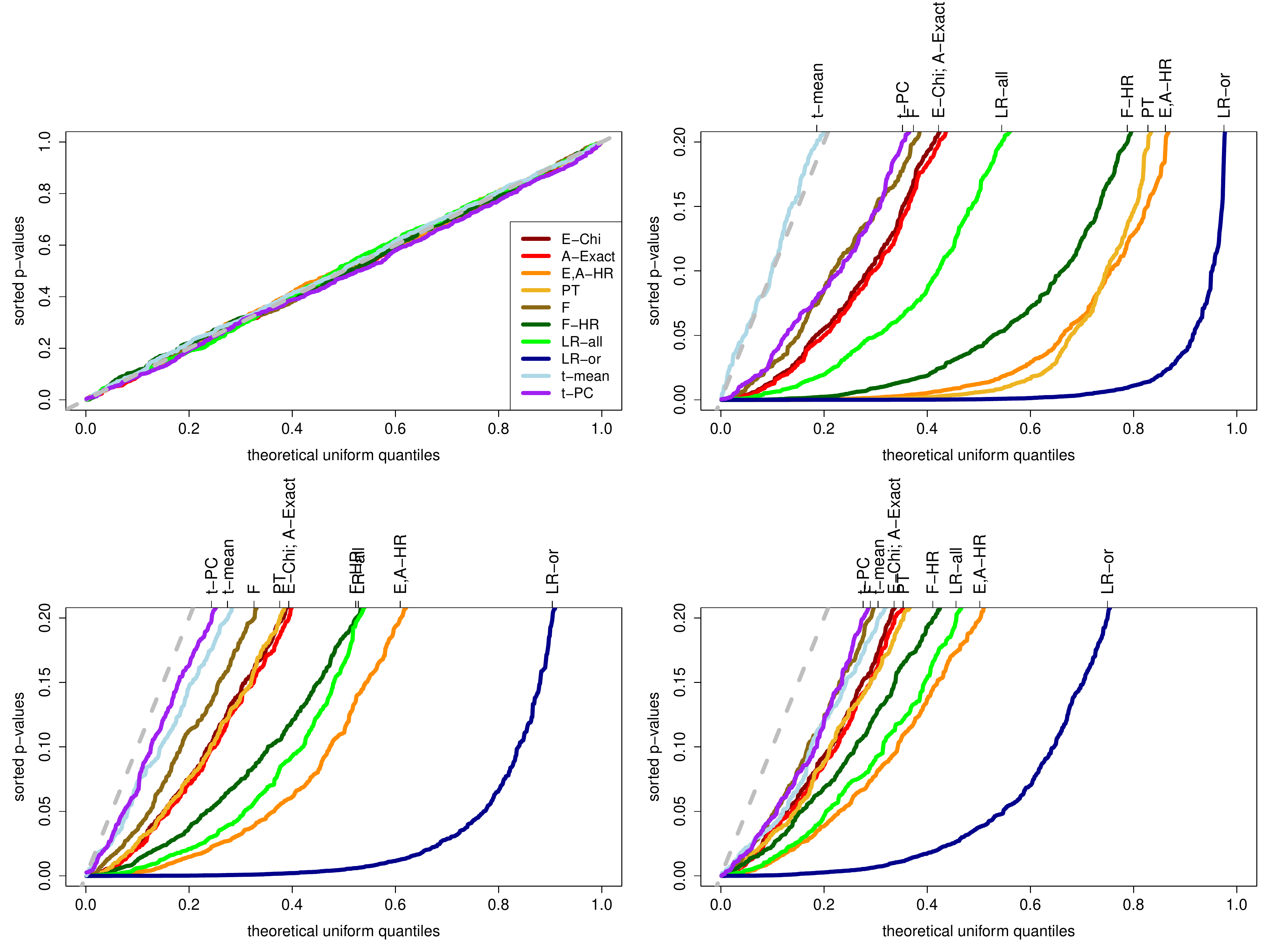}
		\caption{\emph{QQ plots of p-values for 9 proposed tests over $B = 800$ replications. Horizontal axis represents uniform quantiles expected under $H_0$. Curves of more powerful tests bend down and right. 45 degree line through origin plotted in grey. \textbf{Top left:} $\beta = 0$. All tests are valid. \textbf{Top right:} Single large signal. \textbf{Bottom left:} Five moderate signals. \textbf{Bottom right:} Spread out signal. Latter three panels truncated above at 0.2 to aid readability.}}\label{fig:sim_p_vals_single}
	\end{figure}
	
	\begin{table}[htb]
		\centering
		\begin{tabular}{c c| c c c c c c c c c}
			\hline\hline
			$\alpha$ & $\beta$ & $E$,$A$-$HR$ & $E$-$Chi$ & $A$-$Exact$ & $PT$ & $F$ & $F$-$HR$ & $LR$-$all$ &$t$-$mean$ & $t$-$PC$ \\
			\hline
			$0.05$ & Single & 72.4 & 20.1 & 21.9 & \textbf{74.2} & 16.4 & 57.7 & 32.8 & 4.9 & 13.7\\
			&Moderate & \textbf{47.3} & 19.6 & 20.7 & 19.8 & 16.0 & 30.3 & 39.9 & 10.3 & 9.6\\
			&Spread & \textbf{42.0} & 23.1 & 25.1 & 26.5 & 19.0 & 28.5 & 37.2 & 21.9 & 19.6 \\
			\hline
			$0.1$ & Single & \textbf{79.1} & 29.9 & 30.8 & 78.5 & 22.8 & 69.4 & 44.1 & 10.3 & 24.4\\
			& Moderate & \textbf{56.5} & 28.5 & 29.8 & 28.8 & 21.9 & 42.0 & 50.1 & 17.3 & 15.1\\
			& Spread & \textbf{53.0} & 32.9 & 34.8 & 32.7 & 26.1 & 40.4 & 48.5 & 26.9 & 28.8\\
			\hline
		\end{tabular}
		\caption{\emph{Ratio (in \%) of test power to that of the non-selective oracle exact likelihood ratio test ($LR$-$or$). \textbf{Top half:} Power at $\alpha = 0.05$ level of significance for three non-zero $\beta$ configurations described in the main body. \textbf{Bottom half:} Same, but for $\alpha = 0.1$.}}\label{tab:power_rat_single}
	\end{table}
	
	The main insights gleaned here are:
	\begin{itemize}
		\item Selective inference seems well justified. Exact and approximate selective likelihood ratio tests (with hit-and-run reference: orange curve -- they coincide exactly here) outperform the non-selective LR test (light green curve). Outperformance decreases as signal becomes more spread out. In the single signal case, performance relative to the oracle is quite pleasing, with selective LR tests achieving almost $75\%$ and $80\%$ of the power of the oracle (Table~\ref{tab:power_rat_single}; rows 1 and 4, first column).
		\item Selective tests with hit-and-run reference distributions have considerably more power than do their non-sampling counterparts. This holds consistently across the exact and approximate likelihood ratio tests and the truncated F test. Apparently the additional conditioning required to manufacture pretty, non-sampled truncation regions critically constrains test power. Although more time consuming, the expense of generating hit-and-run samples seems justified.
		\item Exact and approximate likelihood ratio statistics (with hit-and-run reference: orange curve) outperform the selective F-test (dark green curve). They require fewer degrees of freedom and can more easily detect signal distributed over a smaller number of predictors.
		\item Protolasso, designed for single signal clusters, performs very well for a single strong signal, outperforming all but the oracle (top right panel Figure~\ref{fig:sim_p_vals_single}; gold curve). Its performance deteriorates as the signal is spread out over more predictors. Performance of the selective likelihood ratio tests remains strong.
		\item Performance of the unsupervised prototype tests is uniformly poor. Both tests are always amongst the least powerful tests.
	\end{itemize} 
	
	The experiment quite clearly demonstrates the value of selective inference: we obtain more power by first doing variable selection, and then testing, than by merely skipping to the testing step. The additional effort to construct valid tests post selection seems justified.
	
	Protolasso, designed for single signal clusters, performs very well for its stated purpose. However, its performance deteriorates as signal is spread over more predictors. Here the value of the selective likelihood ratio tests is demonstrated. By forming a prototype from more than one variable, we are able to capture more signal and glean more power to detect smaller, spread out signals.
	
	Finally, the reduction in degrees of freedom spent when we drop from $F$-tests to likelihood ratio tests seems to lead to a commensurate increase in power. Although an $F$ statistic has a closed form and is, in principle, more tractable than an LR statistic, we believe the added computational effort is merited. Although not clearly revealed here, the approximate selective LR statistic is both tractable and enjoys similar power to that of the exact LR statistic. The combination of power and tractability of the approximate LR statistic is explored further in the next section.
	
	In sum then, we note the admirable performance of the exact and approximate selective likelihood ratio tests. They perform with greater or equal power as the competition over a range of signal structures. The competition includes statistics with purely unsupervised prototypes and those considering all variables in the group. It would seem that supervision of prototype construction, and the simultaneous whittling down of the predictors used in its construction, contribute to this improved performance. Here in the univariate model, there is little difference in computation time between the exact and approximate tests and they can be used interchangeably. We shall see in the next section that one can compute the approximate statistic far more efficiently in the multivariate model and this statistic will be preferred there.
	
	%%%% PARTIAL MODEL -- lasso prototypes
	
	\section{Lasso prototypes: multivariate model}\label{sec:lasso_partial}
	The multivariate model (\ref{eq:partial}) was introduced earlier. In the previous section, we demonstrated the merits of selective likelihood ratio statistics for testing for the presence of group-wide signal in the univariate model. Here we discuss matters arising from the application of similar ideas to inference in the multivariate model, where we deal with $K > 1$ groups simultaneously. The exact LR statistic no longer has a closed form expression and neither does its reference distribution. Inference dependent on this statistic is thus complicated by computational considerations. We discuss these next. Tests discussed here are examples of \textit{selected model} tests -- a notion introduced in \citet{selinfwill}.
	
	\subsection{Exact likelihood ratio statistic}\label{sec:lasso_partial_elr}
	Computation of the exact likelihood ratio (ELR) statistic requires the maximisation of the log likelihood in~(\ref{eq:loglikelihood}). We use the Newton-Raphson method to find optimal $\theta$. The gradient and Hessian of the (negative) log likelihood are shown in~(\ref{eq:gradient}) and (\ref{eq:hessian}). Both these quantities require the computation of $G(\theta)^{-1}$. 
	
	If $\theta \neq 0$, then $G(\theta)$ is dense and inverting it directly takes time $O(n^3)$ -- a major computational bottleneck, especially considering that we require tens of thousands of hit-and-run replications to perform valid inference. Even though the Newton-Raphson method tends to converge in a handful (usually 5 - 10) iterations, the sheer number of hit-and-run replications required makes the use of this statistic practically infeasible for even moderately sized datasets ($n \geq 100$).
	
	Fortunately there is some exploitable structure in $G(\theta)$. Suppose the first-round lasso, on the $K$ groups in isolation, produced selected sets $S_1, S_2, \dots S_K$ and that we have constructed the singular value decompositions (SVDs) of the matrices $X_{S_k} = U^{(k)}D^{(k)}V^{(k)\top}$ so that the hat matrices $H_k = X_{S_k}\left(X_{S_k}^\top X_{S_k}\right)^{-1}X_{S_k}^\top = U^{(k)}U^{(k)\top}$ can be decomposed into the sums of rank-one matrices: $H_k = \sum_{j = 1}^{|S_k|}u_j^{(k)}u_j^{(k)\top}$. The Sherman-Morrison formula can be used to compute the inverse iteratively:
	\begin{enumerate}
		\item Set $G^{-1}_0 = I_n$ and $i = 0$.
		\item For $k = 1, 2, \dots K$ and $j = 1, 2, \dots, |S_k|$, compute $G^{-1}_{i+1} = G^{-1}_i + \frac{\theta_{g(i+1)}\left(G^{-1}_i u^{(k)}_j\right)\left(G^{-1}_i u^{(k)}_j\right)^\top}{1-\theta_{g(i+1)}u^{(k)\top}G^{-1}_i u^{(k)}}$, where $g(i) \in \{1, 2, \dots, K\}$ is the group associated with the eigenvector at step $i$. Set $i = i + 1$ each time.
		\item $G(\theta)^{-1} = G^{-1}_{\sum_{k = 1}^K|S_k|}$
	\end{enumerate}
	
	This algorithm exploits the inherent sparsity of the problem and finds the inverse in time $O(s\cdot n^2)$, where $s = \sum_{k = 1}^K|S_k|$. If $s << n$, computation time should be much reduced. However, computation time still scales quadratically, making an exact likelihood ratio statistic infeasible for larger datasets. Table~\ref{tab:inverse_timings} shows that average computation time of the ELR statistic is markedly reduced when computing the inverse using Sherman-Morrison instead of proceeding naively. However, notice that performance even of this method degrades as $n$ increases, especially when sparsity also decreases. Compare these times to that of the approximate likelihood ratio (ALR) statistic, which is rapidly computed, regardless of sample size and sparsity.
	
	\begin{table}[htb]
		\centering
		\begin{tabular}{c c | r r r}
		\hline\hline
		$\alpha$ & $n$ & ELR (naive) & ELR (S-M) & ALR \\
		\hline
		0.05 & 100 & 14.8 & 7.5 & $<0.1$ \\
		& 200 & 82.5 & 41.8 & $<0.1$\\
		& 500 & 994.7 & 495.83 & $<0.1$\\
		\hline
		0.3 & 100 & 12.1 & 9.3 & $<0.1$\\
		& 200 & 80.1 & 67.5 & $<0.1$\\
		& 500 & 1065.5 & 977.8 & 0.1\\ 
		\hline
		\end{tabular}
		\caption{\emph{Average time, in milliseconds, to compute a single replication of the likelihood ratio statistic. Average taken over $B = 200$ replications. Three LR statistics are considered: ELR with naive inverse of $G(\theta)^{-1}$, ELR with Sherman-Morrison inverse and ALR with its closed form expression. Sample sizes vary over $n = 100, 200, 500$. First generate $X \in \Re^{n \times n}$, with columns in 4 equally sized groups, each with within group pairwise correlation of $\rho = 0.3$. Select the first $\lfloor 0.25\cdot\alpha\cdot n \rfloor$ columns in each group to form the lasso prototypes. Generate $B = 200$ response vectors $y \in \Re^n$ from $N(0, 1)$. $\alpha$ is the proportion of selected columns in each group. It measures the underlying sparsity of the inversion problem. All computations done on a single core of a 3.1 GHz Intel Core i7. Newton-Raphson and matrix inversion routines were coded in C++, with output fed back to R via the Rcpp package of \citet{rcpp}. Used the Armadillo C++ library of \citet{armadillo}.}}\label{tab:inverse_timings}
	\end{table}
	
	Clearly the ALR is computed much more efficiently. The multivariate model version is described next. We also demonstrate that the power decrease when testing with the ALR statistic, rather than the ELR statistic, is negligible to non-existent, making it ideal for use in practice.
	
	\subsection{Approximate likelihood ratio statistic}\label{sec:sec:lasso_partial_alr}
	Adopting the same technique employed in Section~\ref{sec:lasso_marginal_lr_approx}, we write the second order Taylor approximation of the log likelihood in~(\ref{eq:loglikelihood}) about $\theta = 0$:
	\[
		\tilde{\ell}(\theta) = -\theta^\top \tilde{g} - \frac{1}{2}\theta^\top \tilde{H} \theta - \frac{1}{2\sigma^2}||y - \hat{Y}\theta||^2_2
	\]
	where $\tilde{g}_k = {\rm tr}(H_k) = |S_k|$ and $\tilde{H}_{kl} = {\rm tr}(H_kH_l)$. The approximate likelihood ratio test statistic for the null hypothesis $H_0: \theta_1 = 0$ is defined as:
	\begin{align*}
		\tilde{R} &= 2\left(\max_\theta \tilde{\ell}(\theta) - \max_{\theta: \theta_1 = 0}\tilde{\ell}(\theta)\right) \\
				= \left(\hat{Y}^\top y/\sigma^2 - \tilde{g}\right)&\left(\hat{Y}^\top\hat{Y}/\sigma^2 + \tilde{H}\right)^{-1}\left(\hat{Y}^\top y/\sigma^2 - \tilde{g}\right)\\
				 &- \left(\hat{Y}_{-1}^\top y/\sigma^2 - \tilde{g}_{-1}\right)\left(\hat{Y}_{-1}^\top\hat{Y}_{-1}/\sigma^2 + \tilde{H}_{-1}\right)^{-1}\left(\hat{Y}_{-1}^\top y/\sigma^2 - \tilde{g}_{-1}\right)
	\end{align*}
	where $\tilde{g}_{-1}$ is the same as $\tilde{g}$, but with the first entry deleted. Similarly, $\hat{Y}_{-1}$ deletes the first column of $\hat{Y}$ and $\tilde{H}_{-1}$ deletes the first row and column of $\tilde{H}$.
	
	Such a closed form expression is computed very rapidly and one can generate many hit-and-run replications quickly, making this statistic practicable for testing $H_0$. Since this model of $y$ defines a curved exponential family, we can use the results of \citet{ritz_skovgaard} which confirm the asymptotic distributional equivalence of the ELR and the ALR, $\tilde{R}$, in the non-selective case. Furthermore, the work of \citet{selinftian} proves that, under conditions likely to hold for our examples, the difference between the ELR and ALR goes to 0 asymptotically, suggesting asymptotic equivalence in the selective case as well. Their results hinge on a randomization of the response. However, this randomization is controlled by the user and can be set to a point mass at 0, ensuring the result even in the absence of explicit randomization.
	
	We thus expect the power of the ELR and ALR to be much the same as the sample size increases. In Figure~\ref{fig:elr_alr_power} we see that the two tests have equivalent power, even in a moderate sample case. The ALR can be computed much more quickly than can the ELR, without a major decrease in power. Its use in practice is strongly recommended.
	
	Performing a test with either the ELR or ALR requires the generation of samples from the null distribution of the response. However, this distribution has nuisance parameters $\theta_2, \theta_3, \dots, \theta_K$. We rid ourselves of these parameters by conditioning on $\tilde{P}_{-1}y = \tilde{X}_{-1}\tilde{X}_{-1}^\dagger y = \delta$ (say), where $\tilde{X}_{-1} = \left[X_{S_2}, X_{S_3}, \dots, X_{S_K}\right]$, and then sampling from the distribution of $\tilde{y} = \left(I - \tilde{P}_{-1}\right)\tilde{\epsilon}  + \delta$ where $\tilde{\epsilon} \sim N\left(0, \sigma^2I\right)$, subject to $\tilde{A}\tilde{\epsilon} \leq \tilde{b}$, with $\tilde{A} = A\left(I - \tilde{P}_{-1}\right)$ and $\tilde{b} = b - A\delta$. The validity of the null distribution generated in this manner is confirmed in the following lemma:
	
	%% lemma
	\begin{lemma}
	 Using notation defined in the main body, suppose, without loss of generality, that $\theta_1 = 0$ in model~(\ref{eq:partial}). The response $y$ is subject to affine constraints $Ay \leq b$. 
	 
	 Furthermore, suppose we condition on the value of $\delta = \tilde{P}_{-1}y$. Let $\tilde{y} \stackrel{\Delta}{=} \left(I - \tilde{P}_{-1}\right)\tilde{\epsilon} + \delta$, with $\tilde{\epsilon} \sim N(0, \sigma^2I)$ and $\tilde{A}\tilde{\epsilon} \leq \tilde{b}$, where $\tilde{A} = A\left(I - \tilde{P}_{-1}\right)$ and $\tilde{b} = b - A\delta$. Then $\tilde{y} | \delta \stackrel{d}{=} y | \delta$.
	\end{lemma}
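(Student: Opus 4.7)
My plan is to reduce both $y \mid \delta$ (under the selection event $Ay \leq b$) and $\tilde{y} \mid \delta$ to the same decomposition into a fixed part living in $V := \mathrm{col}(\tilde{X}_{-1})$ plus a Gaussian part concentrated on $V^\perp$ subject to an affine constraint. The key structural observation is that, under model~(\ref{eq:partial}) with $\theta_1 = 0$, the response is generated by $y = G^{-1}\epsilon$ where $G = I - \sum_{k=2}^K \theta_k H_{S_k}$, and every $H_{S_k}$ for $k \geq 2$ is symmetric with column space inside $V$. Consequently $H_{S_k}(I - \tilde{P}_{-1}) = 0$ and $(I - \tilde{P}_{-1})H_{S_k} = 0$, so $G$ commutes with $\tilde{P}_{-1}$ and acts as the identity on $V^\perp$.

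First, I would apply these commutation relations to the identity $\epsilon = Gy$. Splitting $y = \tilde{P}_{-1}y + (I - \tilde{P}_{-1})y$ and using $G(I - \tilde{P}_{-1}) = I - \tilde{P}_{-1}$ gives
\begin{equation*}
    \tilde{P}_{-1}\epsilon = G\,\tilde{P}_{-1}y = G\delta, \qquad (I - \tilde{P}_{-1})\epsilon = (I - \tilde{P}_{-1})y.
\end{equation*}
So conditioning on $\delta$ is equivalent to conditioning on $\tilde{P}_{-1}\epsilon = G\delta$. Since $\epsilon \sim N(0,\sigma^2 I)$ is spherical, its orthogonal components $\tilde{P}_{-1}\epsilon$ and $(I - \tilde{P}_{-1})\epsilon$ are independent, so conditional on $\delta$ the component $(I - \tilde{P}_{-1})\epsilon$ retains its unconditional $N(0, \sigma^2(I - \tilde{P}_{-1}))$ law. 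Writing $y = \delta + (I - \tilde{P}_{-1})\epsilon$, we obtain the conditional representation of $y \mid \delta$ before imposing the selection event.

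Next I would rewrite the selection constraint in terms of the noise. Since $A\tilde{P}_{-1}y = A\delta$ is fixed by the conditioning, $Ay \leq b$ becomes $A(I - \tilde{P}_{-1})\epsilon \leq b - A\delta$, i.e. $\tilde{A}\epsilon \leq \tilde{b}$. Note $\tilde{A}\epsilon$ depends on $\epsilon$ only through $(I - \tilde{P}_{-1})\epsilon$, so this constraint further truncates only the $V^\perp$-component of $\epsilon$, leaving the independence structure intact. Thus $y \mid \delta$ is distributed as $\delta + (I - \tilde{P}_{-1})\epsilon$ where $\epsilon \sim N(0, \sigma^2 I)$ is truncated to $\tilde{A}\epsilon \leq \tilde{b}$.

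The definition of $\tilde{y} = \delta + (I - \tilde{P}_{-1})\tilde{\epsilon}$ with $\tilde{\epsilon} \sim N(0,\sigma^2 I)$ truncated to $\tilde{A}\tilde{\epsilon} \leq \tilde{b}$ is an identical recipe, so the two conditional laws coincide, yielding $\tilde{y}\mid\delta \stackrel{d}{=} y \mid \delta$. The step I expect to require the most care is the commutation argument in the first paragraph: one must verify for each choice of prototype hat matrix (least squares, ridge, lasso) that the column and row spaces of $H_{S_k}$ lie in $V$ so that $G$ and $\tilde{P}_{-1}$ commute and $G$ is the identity on $V^\perp$; everything else is a clean consequence of the spherical Gaussian decomposition.
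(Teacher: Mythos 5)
Your proposal is correct and follows essentially the same route as the paper's proof: decompose $y = \delta + (I-\tilde{P}_{-1})y$, observe that under $H_0$ the second piece equals $(I-\tilde{P}_{-1})\epsilon \sim N(0,\sigma^2(I-\tilde{P}_{-1}))$ independently of $\delta$, and translate $Ay\leq b$ into $\tilde{A}\epsilon\leq\tilde{b}$. The only difference is that you explicitly justify $(I-\tilde{P}_{-1})y=(I-\tilde{P}_{-1})\epsilon$ via the commutation of $G$ with $\tilde{P}_{-1}$ (using that each $H_{S_k}$, $k\geq 2$, is symmetric with column space in $\mathrm{col}(\tilde{X}_{-1})$), a step the paper simply asserts.
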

	\begin{proof}
	
		Decompose $y$ into two \textit{independent} components: $y = \tilde{P}_{-1}y + \left(I - \tilde{P}_{-1}\right)y = \delta +  \left(I - \tilde{P}_{-1}\right)y$. Under $H_0: \theta_1 = 0$, we have that $\left(I - \tilde{P}_{-1}\right)y = \left(I - \tilde{P}_{-1}\right)\epsilon \sim N\left(0, \sigma^2\left(I - \tilde{P}_{-1}\right)\right)$, so that, conditional on $\delta$, $y \sim N\left(\delta, \sigma^2\left(I - \tilde{P}_{-1}\right)\right)$, subject to $Ay \leq b$.
		
		It is clear that, conditional on $\delta$, $\tilde{y}$ also has this distribution. Furthermore, $A\tilde{y} = A\left(I - \tilde{P}_{-1}\right)\tilde{\epsilon} + A\delta = \tilde{A}\tilde{\epsilon} + A\delta \leq b - A\delta + A\delta = b$, as required.
	
	\end{proof}

	Hence, tests based on a hit-and-run reference, generated as described, are valid, provided we generate enough samples. See Appendix~\ref{app:hit_and_run} for details on how to proceed with sampling. Conditioning on $\delta = \tilde{P}_{-1}y$ is perhaps more than is required for a valid test. We condition on it for ease of derivation of the reference distribution, but believe that one can condition on less (and hence obtain more power), but still retain test validity. We do not pursue this here.
	
	\begin{figure}[htb]
		\centering
		\includegraphics[width=120mm]{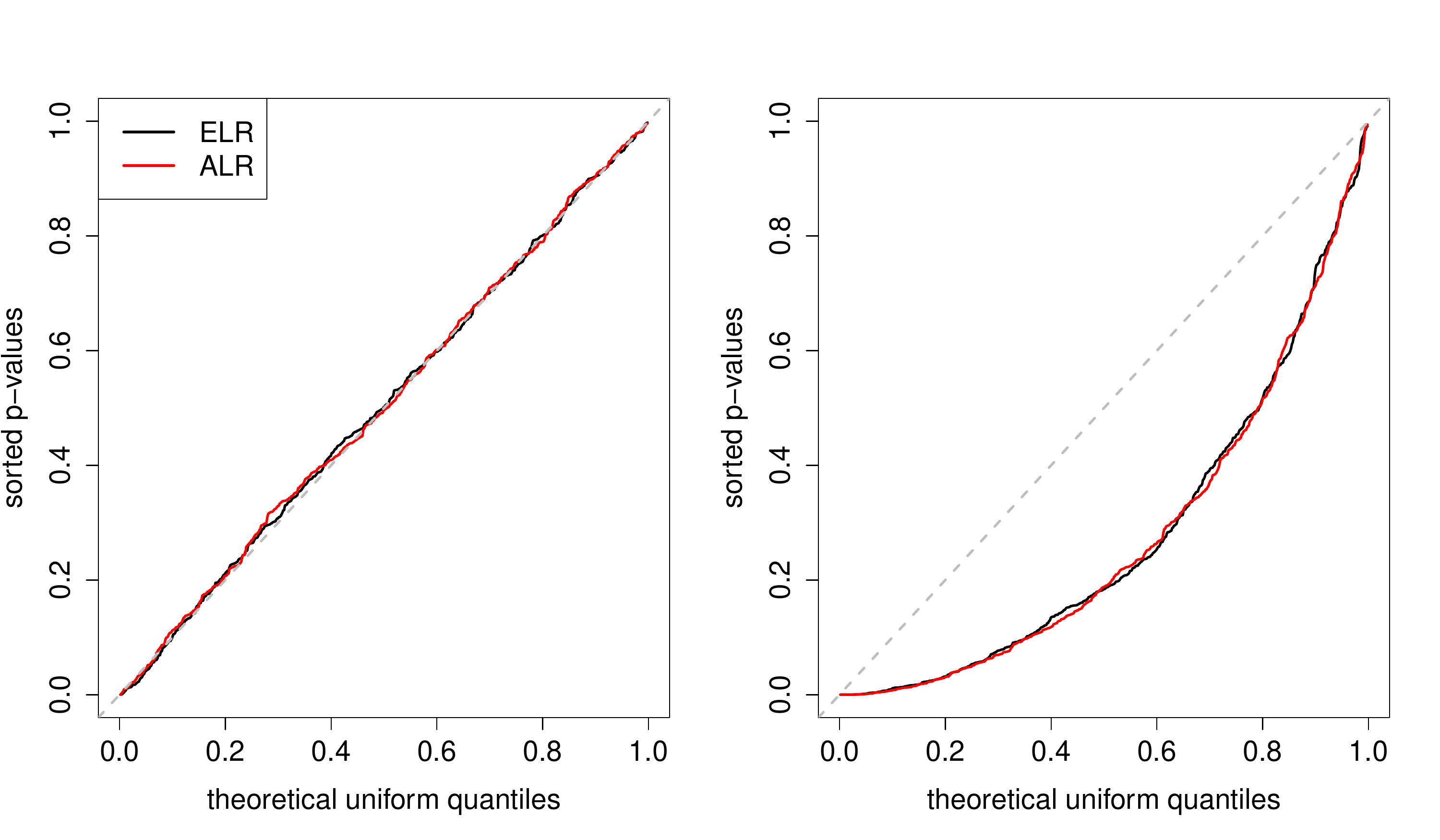}
		\caption{\emph{QQ plots of p-values generated from $ELR$ (black) and $ALR$ (red) tests using their hit-and-run reference distributions (with 50000 replications). Two signal cases are considered: no signal (\textbf{left panel}) and a signal case (\textbf{right panel}). Data matrix $X$ is generated with $n = 100$ rows and $p = 100$ columns, divided into 4 equal groups of size 25. Response $y$ is generated from $y = X\beta + \epsilon$, with $\epsilon \sim N(0, \sigma^2)$. A single $X$ is generated, with $B = 600$ replications of the response. First group is group of interest; we test $H_0: \theta_1 = 0$. Groups of predictors share pairwise correlation of $\rho = 0.3$. Small signal is placed ($\beta^* = 0.5$) on the first 10 columns of group 2, first 2 of group 3 and first 5 of group 4. In the \textbf{left panel}, the first group receives no signal. Note the validity of our tests, despite the non-zero signal on other groups. In the \textbf{right panel}, the first 2 columns of group 1 receive signal of size $\beta^* = 2$. The reader should note the validity of the tests and the very similar p-value distributions for the two tests. Although a very specific case, we believe the power equivalence extends over a much broader array of parameter settings. 45 degree line through origin is drawn for reference (grey, dashed line).}}\label{fig:elr_alr_power}
	\end{figure}

	%%% simulation
	\subsection{Power comparison -- a simulation study}\label{sec:lasso_partial_sim}
	As in Section~\ref{sec:lasso_marginal_sim}, selective likelihood ratio tests were subjected to some competition in a simulation exercise. Attention here focused on the approximate likelihood ratio ($ALR$) statistic and not the exact one ($ELR$). As mentioned above, computation of the $ELR$ is simply too onerous for practical application when a hit-and-run reference distribution is required. We compared performance of the $ALR$ against some of its variants and against that of other test statistics.
	
	In this simulation experiment, we generated a single data matrix $X$ with $n = 300$ rows and  $p = 200$ predictors in 4 equally sized groups. $B = 800$ replications of the response $y$ were generated from the same model as described in the caption of Figure~\ref{fig:elr_alr_power}, here with $\rho = 0$ (output for other values of $\rho$ was qualitatively the same). Only three signal cases were considered. Again, only the first two members of the first group of predictors received signal, each receiving the same amount of signal: $\beta^* = 0$, $\beta^* = 2$ and $\beta^* = 3$. Each of these cases is shown in a panel in Figure~\ref{fig:partial_lasso_sim_pval}.
	
	\begin{figure}[htb]
		\centering
		\includegraphics[width=165mm]{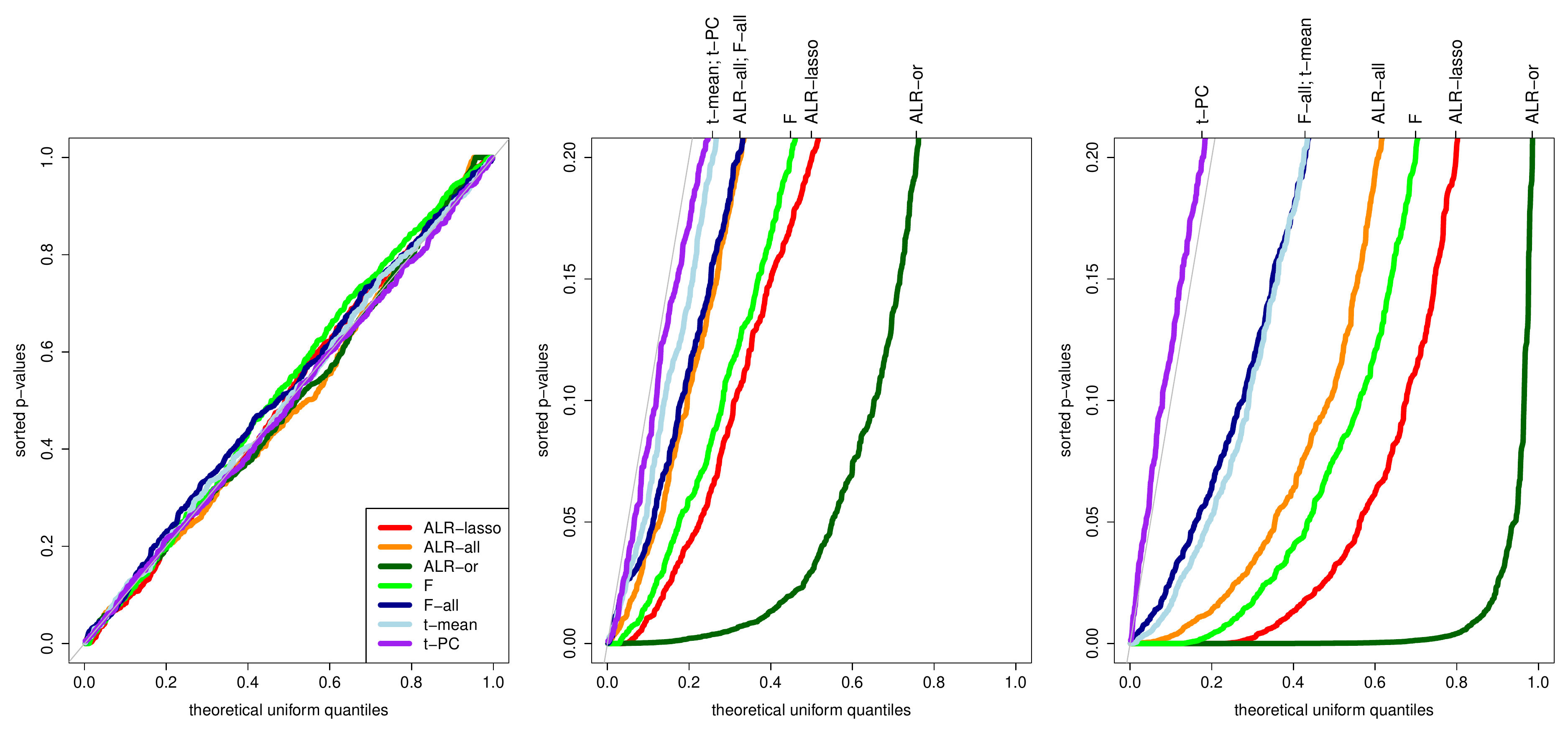}
		\caption{\emph{QQ plots of $B = 800$ p-values from the simulation experiment described in the main text. \textbf{Left panel}: no signal in first group. \textbf{Middle panel:} First two members of group 1 received signal $\beta^* = 2$. \textbf{Right panel:} First two members of group 1 received signal $\beta^* = 3$. More powerful tests have curves that bend down and to the right. Line styles and colours listed in the legend in the left panel. 45 degree line through origin plotted for reference. Middle and right panels truncated above at 0.2 to facilitate readability.}}\label{fig:partial_lasso_sim_pval}
	\end{figure}
	
	Seven tests were considered for the hypothesis that the first group continued non-zero signal:
	\begin{enumerate}
		\item \textit{Selective ALR} ($ALR-lasso$), as described in the previous section. The hit-and-run reference distribution is used, with 50000 replications. Prototypes are formed with fixed $\lambda$, leading to roughly 10 selected columns in each group.
		\item \textit{Non-selective ALR with no column reduction} ($ALR$-$all$), where we compute the $ALR$ of the previous section, but without any selection of columns beforehand. The hat matrices used to compute the prototypes are the projection matrices onto \textit{all} columns in a given group. No selection event needs to be accounted for and we use the asymptotic $\chi^2_1$ distribution as reference. The hope is that the selective test can outperform this test, because this one includes many noise variables, the combined effect of which might dilute test power.
		\item \textit{Non-selective ALR with oracle knowledge of sparsity of $\beta$} ($ALR$-$or$). Again, we compute ALR, but with hat matrices equal to the projection matrices onto the \textit{non-zero} signal columns of each group. Clearly unavailable in practice, performance of this test provides a good upper bound against which to measure the performance of other tests.
		
		\item \textit{Selective F-test} ($F$). Standard $F$-statistic: $F = \frac{y^\top\left(\tilde{P} - \tilde{P}_{-1}\right)y/|S_1|}{y^\top\left(I - \tilde{P}\right)y/(n-p)}$, where $\tilde{P} = XX^\dagger$ and $\tilde{P}_{-1}$ is as before. A hit-and-run reference distribution is used to account for the selective constraints.
		\item \textit{Non-selective F-test with no column reduction} ($F$-$all$). Standard $F$-test for non-zero signal over any of the columns of group 1. Has $p/4$ numerator and $n - p$ denominator degrees of freedom. Non selection for which to account.
		\item \textit{t-test for mean prototypes} ($t$-$mean$). Test for $H_0: \beta_1= 0$ in model $y = \sum_{k = 1}^4 \beta_k x_k^* + \epsilon$, where $x_k^* = 4/p\sum_{j \in g(k)}x_j$, the pointwise mean of the columns in group $k$.
		\item \textit{t-test for first principal component prototypes} ($t$-$PC$). Test for $H_0: \beta_1= 0$ in model $y = \sum_{k = 1}^4 \beta_k x_k^* + \epsilon$, where $x_k^*$ is the first principal component of the columns of group $k$.
	\end{enumerate}
	
	The latter two tests were included, because they constitute \textit{non-selective} prototypes. Inference is performed as in the classical regression model. Much of the premise of this paper is that additional power for signal detection is gleaned by first prototyping a group and then testing on the prototype. We propose prototypes that get to have a look at the response $y$ before testing, at the expense of additional details to account for this peek. We presume that this supervised prototyping is better than the unsupervised prototyping afforded by the mean and principal component prototypes.
	
	Indeed, this notion seems to be borne out in the output of Figure~\ref{fig:partial_lasso_sim_pval}. The left panel confirms the validity of all tests. Furthermore, we note how our proposed test ($ALR$-$lasso$, red line) outperforms the unsupervised prototype tests (light blue and purple) handily in the signal panels. This despite the conditioning associated with the selection of prototypes. The noise variables included in the unsupervised prototypes clearly dilute the signal in the prototypes and hence the power of subsequent tests.
	
	The selective $ALR$ also beats its ignorant, non-selective counterpart ($ALR$-$all$, orange line), again probably due to the inclusion of many noise variables in the latter. Finally, $ALR$-$lasso$ also outperforms its $F$ counterpart ($F$, green), possibly because the latter requires more degrees of freedom in its test than does the former.
	
	Overall then, it would seem that selective likelihood ratio statistics enjoy power advantages over non-selective counterparts, tests with unsupervised prototypes, F tests with more degrees of freedom and the protolasso with its focus on single prototypes.
	
	\section{Discussion}\label{sec:conclusion}
	Predictors often occur in natural groupings. Sometimes interest lies in testing whether entire collections of predictors contain signal. In this paper, we propose a schema for considering such group-wide testing problems. 
	
	In a quest for test power, we propose the construction of response-aware group prototypes. A first step constructs these prototypes, while a subsequent step performs hypothesis tests only on these prototypes. A new model is proposed, called the \textit{prototype model}, which lends itself naturally to inference using likelihood ratio statistics. Our method for constructing prototypes attempts to utilize more of the signal variables in a group than the single prototype method of \citet{protolasso}. Also, its awareness of the response increases its power over that of unsupervised prototypes (like group means and first principal components). Finally, since we test a single prototype per group, our tests are likely to require fewer degrees of freedom than do standard $F$ tests. Since our prototypes are selected with reference to the response, subsequent analysis should account for it. We do this using the selective inference framework recently explored in the literature.
	
	Our proposed schema accounts for both the manner of prototype construction and the type of test (univariate vs multivariate). Considerations differ for the univariate and multivariate models. The multivariate model, in particular, is hampered by computational issues. These are addressed via a closed form approximate likelihood ratio statistic. This $ALR$ seems to enjoy the same power as the $ELR$. This is shown to be true asymptotically in other papers, with simulations here suggesting it holds even in moderately small samples.
	
	The prototype model also lends itself to fruitful application in estimation or model prediction combination. Although not the main focus of this paper, we show how predictions from the prototype model outperform those of more standard prediction methods. Application of the prototype model and inference schema seems to have organized ideas quite prettily, opening many fruitful avenues of exploration.
	
	Ultimately, we have proposed new test statistics for testing group-wide signal, both in univariate and multivariate models. For univariate models, computation issues do not constrain and the exact and approximate selective likelihood ratio tests ($ELR$ and $ALR$) can be used interchangeably. Their awareness of the response and variable selection seem to improve their performance over unsupervised counterparts and those that consider all group members. In the multivariate model, the outperformance of the selective likelihood ratio statistics seems to persist. Here, however, the $ELR$ is fraught with computational complexity and we prefer the $ALR$, which seems to enjoy similar power to the $ELR$.
		
	An R package is being developed with functions carrying out some of the tests of this paper. In particular, the $ELR$, $ALR$, $F$ and \textit{prototest} statistics can be computed for the univariate and multivariate models. Options are provided for pre-specification of selected group members, their automatic selection, or a mixture of the two. Finally, p-values are gleaned against non-sampling or hit-and-run reference distributions, where available.
					
	%%%% REFERENCES
	\bibliographystyle {agsm}
	\bibliography{bib}
	
	\appendix
	\section{Estimation: simulation study}\label{app:estimation_sim}
	The object of this simulation experiment was to compare the performance of the prototype model with lasso prototypes and $\mu \neq 0$ in~(\ref{eq:prototype_model}) to that of a standard least-squares-after-lasso fit and some other prototype model fits privy to additional information. 
	
	First, we generated the matrix $X$ with $n$ rows and columns of $K  = 4$ groups of size $|S| = 25$ each. Pairwise correlation between columns within a group is $\rho$, while columns in different groups are uncorrelated. 
	
	In each group, we fixed a subset of the columns $S_k$ and these feed into the response $y$. Two such sparsity configurations were considered and are discussed below. Given the reduced matrices from each group $X_{S_k}$, we constructed the matrices $H_{k} = X_{S_k}X_{S_k}^\dagger$ and generated $B = 120$ replications of the response $y_{train}$ from the model~(\ref{eq:prototype_model}) for different values of $\theta$ and $\mu$. For each $y_{train}$, we generated a corresponding $y_{test}$ from the same model. For each replication of $y_{train}$, we construct various \textbf{estimators} of $y_{test}$. These are:
	\begin{enumerate}
		\item \textit{Lasso prototype ML (LPML)}: For each group, run the lasso on all 25 columns of that group in isolation (including the intercept and allowing the algorithm to standardise according to the default of glmnet), using 10-fold CV to select the optimal regularization parameter. This produces selected sets $\hat{S}_k$ for $k = 1, 2, \dots, K$. Then construct $\hat{H}_k = X_{\hat{S}_k}X_{\hat{S}_k}^\dagger$. Use these hat matrices to find maximum likelihood estimates of $\theta$ and $\mu$ = $\mu$\textbf{1} in the model~(\ref{eq:prototype_model}). From this we can construct the \textit{mean (LPML-M)} estimate $\hat{y}^m = \hat{\mu}G(\hat{\theta})^{-1}$\textbf{1} and the \textit{linear (LPML-L)} estimate $\hat{y}^l = \hat{\mu}$\textbf{1} $+ \sum_{k = 1}^K\hat{\theta}_k\hat{H}_ky_{train}$.
		\item \textit{Standard least-squares-after-lasso (LSL)}: Fit the lasso of the response on the entire $X$, including an intercept, using 10-fold CV to find the optimal regularisation parameter. This gives selected set $\hat{S}$. Form the estimate $\hat{y} = X_{\hat{S}}\left(X_{\hat{S}}^\top X_{\hat{S}}\right)^{-1}X_{\hat{S}}^\top y_{train}$.
		\item \textit{Standard least-squares-after-lasso with sparsity oracle (LSL-O)}: Let $S$ be the set with all preselected non-zero columns (as set as simulation parameter). Form the estimate $\hat{y} = X_{S}\left(X_{S}^\top X_{S}\right)^{-1}X_{S}^\top y_{train}$.
		\item \textit{Oracle prototype ML (OPML)}: For each group, we have the set $S_k$ of non-null predictors, as defined by the simulation parameters. Form $H_k = X_{S_k}X_{S_k}^\dagger$ and use these to estimate $\theta$ and $\mu$ in~(\ref{eq:prototype_model}). Form the \textit{linear} estimate $\hat{y}^l = \hat{\mu}$\textbf{1} $+ \sum_{k = 1}^K\hat{\theta}_kH_ky_{train}$.
		\item \textit{Super oracle prototype ML (SOPML)}: Assume we know sparsity patterns $S_k$ and the true $\mu$ and $\theta$. Form \textit{mean (SOPML-M)} and \textit{linear (SOPML-L)} estimates as before.
	\end{enumerate}
		\textbf{Simulation parameters} control four components:
		\begin{enumerate}
			\item \textit{Sparsity pattern}: Two sparsity patterns. The first defines the first three predictors in each group to be the non-null predictors (\textit{equal group}); the second takes the first 10, 5, 5 and 3 from the first through fourth groups respectively (\textit{unequal groups}).
			\item $\theta$: Five configurations were tried. $\theta^{(1)} = 0$, $\theta^{(2)} = 0.2\cdot$\textbf{1}, $\theta^{(3)} = 0.4\cdot$\textbf{1}, $\theta^{(4)} = (0, 0, 0.2, 0.5)$, $\theta^{(5)} = (0.5, 0.2, 0, 0)$.
			\item $\mu = 0, 2$. 
			\item $\rho = 0, 0.3$.
		\end{enumerate}
		
		We report quantities related to the mean squared error when predicting $y_{test}$, i.e. $\frac{1}{n}\sum_{i = 1}^n\left(y_{test, i} - \hat{y}_i\right)^2$. Note that we have results for 7 estimators over 40 different simulation parameter settings. Selected results are summarised in Figure~\ref{fig:mse_ratios}. The figure shows only a representative selection of the total output, but reveals the important features of the experiment. Boxplots are constructed over $B = 120$ replications of the experiment. We measure the ratio of an estimator's test MSE and that of the $LMPL$-$M$ estimator. 
		
		\begin{figure}[htb]
			\centering
			\includegraphics[height=180mm, width=140mm]{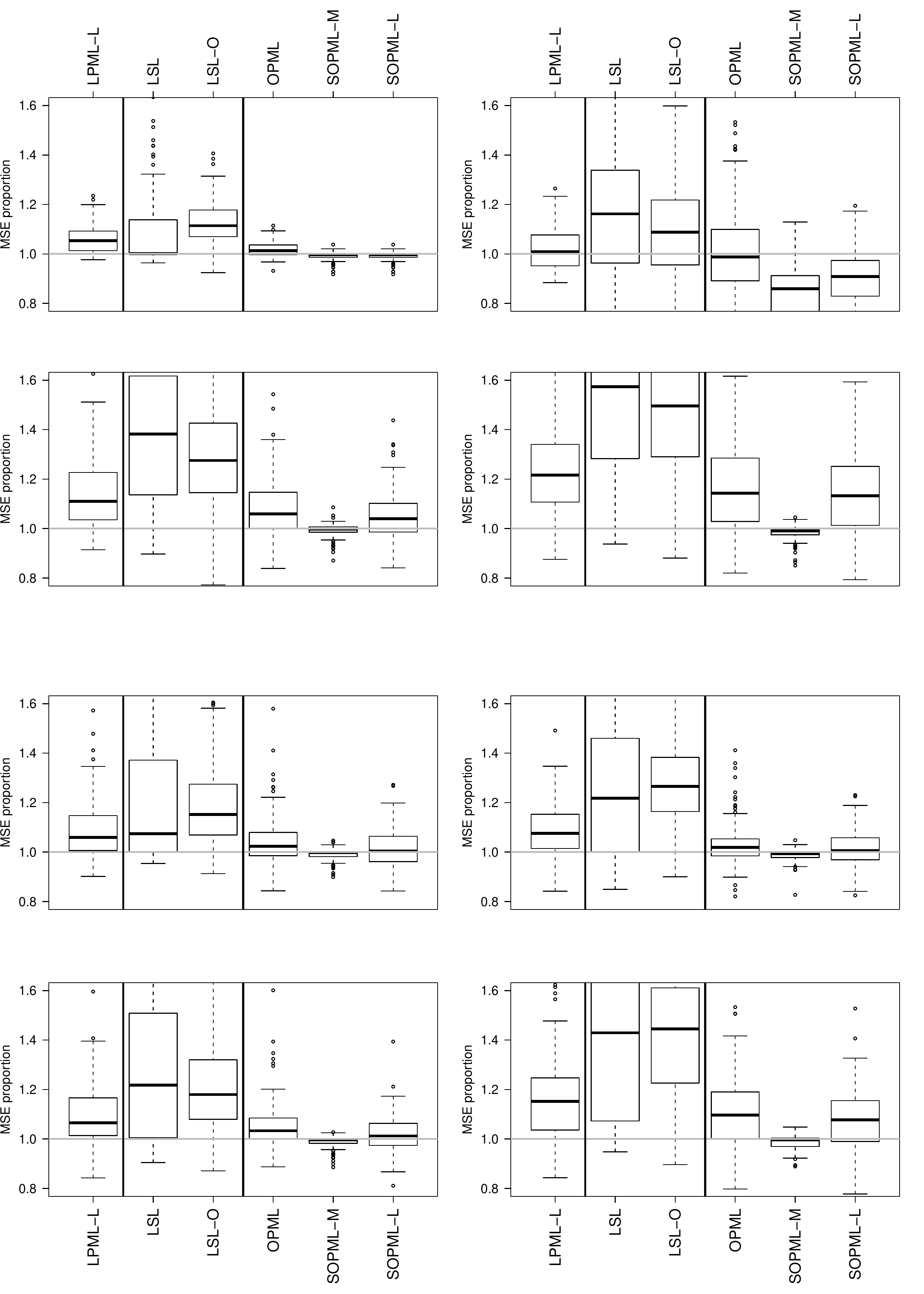}
			\caption{\emph{Boxplots, over $B = 120$ replications of the ratio of the test MSE of estimators to that of the test MSE of the $LPML$-$M$ estimator. Estimators are described in the main text and are labeled at the top and bottom of the figure and in each column. \textbf{Top row:} In the \textbf{left panel:} zero signal ($\theta = 0$, $\mu = 0$); \textbf{right panel:} large signal over equal groups ($\theta = 0.4$, $\mu = 2$). Estimators perform similarly when there is no signal. We note that the performance of the $LPML$-$M$ estimator is very similar to that of the oracle $SOPML$ on the far right. Performance degrades relative to the oracle for larger $\mu$. \textbf{Remaining rows:} \textbf{left panels:} Equal groups; \textbf{right panels:} unequal groups. Rows represent the large $\theta$, increasing $\theta$ and decreasing $\theta$ ($\theta^{(3)}$, $\theta^{(4)}$, $\theta^{(5)}$ respectively), with $\mu = 0$. All panels have $\rho = 0$. Generally, it seems as though the $LPML$-$M$ estimator outperforms the least-squares-after-lasso estimator and its oracular version privy to the true sparsity pattern. Furthermore, $LPML$-$M$ performs very similarly to the oracle that knows the sparsity pattern and true parameter values ($SOPML$-$M$).}}\label{fig:mse_ratios}
		\end{figure}
		
		Broad conclusions of the simulation experiment are:
		\begin{itemize}
			\item Prediction performance of the lasso prototype ML estimator ($LPML$-$M$) is better than that of the least squares after lasso estimator, sometimes 40\% - 60\% better. Improvements are more pronounced for larger $\theta$ and for unequal (rather than equal) groups. Its performance even exceeds that of the least squares estimator privy to the correct sparsity pattern ($LSL$-$O$).
			\item Prediction performance of the lasso prototype ML estimator is very similar to that of the ``super" oracle that is given both the sparsity pattern and true parameter values. This can be deduced from the tight clustering around 1 exhibited by all the $SOPML$-$M$ boxplots. The exception seems to be the $\mu = 2$ case (top right panel). Here we see a marked improvement of the oracle over the $LMPL$ estimator. 
			\item Mean estimators (i.e. those with suffix -$M$) tend to outperform their linear counterparts (suffix -$L$). This is unsurprising, considering the model specification~(\ref{eq:prototype_model}). 
			\item Even the linear version of the prototype lasso ML estimator seems to outperform the least squares after lasso estimator.
			
		\end{itemize}
		
		\section{Hit-and-run sampling}\label{app:hit_and_run}
		Generating samples of our selective test statistics under the null hypothesis requires sampling from a Gaussian distribution under affine constraints. A simple strategy is merely to sample from the appropriate Gaussian distribution, keeping only the samples satisfying the constraints (i.e. $Ay\leq b$). However, often the constrained sample space implied by the constraints is small relative to the overall sample space and many such accept-reject samples may need to be generated before finding an acceptable sample point.
		
		Another strategy is the hit-and-run sampler. The principle is simple: given a sample point that satisfies the constraints, step a random distance into a random direction rooted at this point, making sure you stay inside the set implied by the constraints (here a polytope). Should one choose the distributions of the random direction and step length correctly, one can ensure that the resulting sample point has the correct distribution. In this strategy, we accept every generated sample point. However, sample points are now dependent and more are required, than would be for independent samples, to obtain an appropriate reference distribution. Furthermore, one usually discards the first few samples. This burn-in period is meant to make hit-and-run replications less dependent on the originally observed response vector. 
		
		We describe the procedure for generating $B$ points (with $C$ burn-in samples) $y \in \Re^n$ from a $N(0, I_n)$ distribution constrained to the polytope $\{y: Ay\leq b\}$. We suppose that a point $y^{(0)}$, already satisfying these properties, is given to us:
		
		\begin{enumerate}
			\item Let $k = 0$.
			\item Generate $z \in \Re^n \sim N(0, I_n)$ independently and standardise so that $||z||_2 = 1$. $z$ is the step direction.
			\item Find $\mathcal{V}^-$ and $\mathcal{V}^+$, the minimum and maximum values of $z^\top y^{(k)}$ such that $Ay^{(k)} \leq b$. We use the technique described in Sections~\ref{sec:lasso_marginal_lr_exact} and~\ref{sec:lasso_marginal_protolasso}.
			\item Generate $\kappa \sim N^{[\mathcal{V}^-, \mathcal{V}^+]}(0, 1)$ -- the standard normal distribution, truncated to the interval $[\mathcal{V}^-, \mathcal{V}^+]$. $\kappa$ is the step size in direction of (and independent of) $z$. By generating from the appropriate truncated normal distribution, we ensure that the subsequent replicate remains in the polytope.
			\item Form $y^{(k+1)} = y^{(k)} + \left(\kappa - z^\top y^{(k)}\right)\cdot z$.
			\item Set $k = k + 1$ and repeat from step 2, until $k = B + C$.
			\item Return the last $B$ samples.
		\end{enumerate}
		
		Generating from the general $y \sim N(\mu, \Sigma)$, $Ay \leq b$ is now easily managed. It follows that $\tilde{y} = \Sigma^{-1/2}\left(y - \mu\right) \sim N(0, I_n)$. Furthermore, $y = \Sigma^{1/2}\tilde{y} + \mu$, so that $Ay \leq b$ implies $\tilde{A}\tilde{y} \leq \tilde{b}$, with
		\[
		\tilde{A} = A\Sigma^{1/2} \qquad \text{and} \qquad \tilde{b} = b - \Sigma^{1/2}\mu
		\]
		One simply generates replicates from $\tilde{y} \sim N(0, I_n)$, $\tilde{A}\tilde{y}\leq \tilde{b}$, as above and then transform the resulting replicates back via $y =  \Sigma^{1/2}\tilde{y} + \mu$.

\end{document}